\theoremstyle{plain}
\newtheorem{theorem}{Theorem}
\newtheorem{lemma}[theorem]{Lemma}	% [theorem] ==> theorems lemmas, props and corollaries will share a counter
\newtheorem{proposition}[theorem]{Proposition}	
\theoremstyle{definition}
\newtheorem{definition}[theorem]{Definition}
\newtheorem{example}[theorem]{Example}
\newtheorem{remark}[theorem]{Remark}
\newtheorem{assumption}[theorem]{Assumption}
\newcommand{\<}{\langle} 
\renewcommand{\>}{\rangle}
\renewcommand{\(}{\left(}				
\renewcommand{\)}{\right)}
\renewcommand{\[}{\left[}
\renewcommand{\]}{\right]}			
\newcommand\Cb{\mathbb{C}}	
\newcommand\Eb{\mathbb{E}}
\newcommand\Rb{\mathbb{R}}										
\newcommand\Ib{1\hspace{-2.1mm}{1}}
\newcommand\Lc{\mathscr{L}}	
\newcommand\Nc{\mathscr{N}}		
\newcommand\Oc{\mathscr{O}}
\newcommand\BS{\mathrm{BS}}
\newcommand\SVI{\mathrm{SVI}}
\newcommand\eps{\varepsilon}
\newcommand\sig{\sigma}
\newcommand\Gam{\Gamma}
\newcommand\lam{\lambda}
\newcommand\del{\delta}
\newcommand\hh{\widehat{h}}
\renewcommand\d{\partial}		
\newcommand\dd{\mathrm{d}}
\newcommand\ii{\mathtt{i}}
\newcommand{\ee}{\mathrm{e}}
\newcommand{\I}{\mathtt{i}}
\newcommand{\E}{\mathrm{e}}
\begin{document}

\title{From characteristic functions to implied volatility expansions}

\author{
Antoine Jacquier
\thanks{Department of Mathematics, Imperial College London, London, United Kingdom.}
\and
Matthew Lorig
\thanks{Department of Applied Mathematics, University of Washington, Seattle, WA, USA.  Work partially supported by NSF grant DMS-0739195.}
}

\date{This version: \today}

\maketitle
%\tableofcontents

\begin{abstract}
For any strictly positive martingale $S = \E^X$ for which $X$ has a characteristic function, 
we provide an expansion for the implied volatility.
This expansion is explicit in the sense that it involves no integrals, but only polynomials in the log strike.
We illustrate the versatility of our expansion by computing the approximate implied volatility smile 
in three well-known martingale models: one finite activity exponential L\'evy model (Merton), 
one infinite activity exponential L\'evy model (Variance Gamma), and one stochastic volatility model (Heston).  
Finally, we illustrate how our expansion can be used to perform a model-free calibration 
of the empirically observed implied volatility surface.
\end{abstract}

\noindent
\textbf{Keywords}:  Implied volatility expansions, exponential L\'evy, affine class, Heston, additive process, 

%%%%%%%%%%%%%%%%%%%%%%%%%%%%%%%%%%%%%%%%%%%%%%%%%%%%
%
%		SECTION: Introduction
%
%%%%%%%%%%%%%%%%%%%%%%%%%%%%%%%%%%%%%%%%%%%%%%%%%%%%

\section{Introduction}
\label{sec:intro}
While it is rare to find a martingale model for which the transition density is available in closed form 
(the Black-Scholes model being a notable exception), 
there is a veritable zoo of models for which the characteristic function is available explicitly
(exponential L\'evy models  and affine models~\cite{duffiepansingleton} for instance).
The existence of an analytically tractable characteristic function allows for (vanilla) option prices to be computed
quickly using (generalised) Fourier transforms~\cite{lewis2001simple,lipton2002}.
Every model contains unobservable parameters, which are usually calibrated to market data.
This calibration procedure is typically performed using implied volatilities rather than option prices, 
the former being dimensionless.
For a given model, one therefore has to compute (by finite difference, Monte Carlo or numerical integration) 
option prices first and then the corresponding implied volatilities by some root-finding algorithm.
Both steps require sophisticated numerical tools and occasionally somewhat of an artistic touch.
These are computationally expensive and render calibration a long and intensive task.

Over the past decade, many authors have focused on obtaining closed-form approximations 
for both option prices and implied volatilities, partly in order to speed up this calibration process.
Perturbation methods have been used by Lorig and co-authors~\cite{lpp-2} 
(see also~\cite{fouque2003proof,fpss,Fuka2011,sabr})
to obtain such approximations for diffusion-type models.
In extreme regions---where numerical schemes become less efficient---asymptotic expansions 
of densities and of implied volatilities have been obtained in~\cite{BBF2004,DFJV2013,HGLOW2012,LabordereBook,Tankov2010}
in the small-maturity case (both for diffusions and jump models) and in~\cite{JKRM2013} for the large-time behaviour
of affine stochastic volatility models.
Roger Lee~\cite{Lee2004} pioneered the study of the tails of implied volatility, 
and more recent (model-dependent and model-free) results have appeared
in~\cite{BF2009,DFJV2013,FGGS2011,GuliStein2010}.

The goal of this paper is to derive an approximation for the implied volatility in any model whose characteristic function is available in closed form.
This approximation contains no special function and does not require any numerical integration.
It can therefore be used efficiently to accelerate the aforementioned calibration issue.
The methodology follows and extends the previous works~\cite{lorig-jacquier-1,lorigCEV} and is related to some extent 
to the works by Takahashi and Toda~\cite{Toda2013}.
Indeed, by writing the characteristic function as a perturbation around the Black-Scholes characteristic function, 
our expansion has the form of a Black-Scholes price perturbed by some additional quantity (which we shall make precise later), which can then be turned into an expansion for the corresponding implied volatility.

The rest of the paper proceeds as follows:
in Section~\ref{sec:pricing}, we provide a brief review of the characteristic function approach to option pricing
and introduce some notations needed later in the paper.
Section~\ref{sec:iv} contains the main results, namely a series expansion for the implied volatility.
More precisely, we show (Section~\ref{sec:u.expand}) that, whenever the characteristic function is available in closed-form, the European call price can be written as a regular perturbation around the Black-Scholes price.  
A similar result then holds for the implied volatility, 
as detailed in Sections~\ref{sec:sig.expand} and~\ref{sec:simple}.
In Section~\ref{sec:examples} we numerically test our results and provide practical details about this implementation.

%%%%%%%%%%%%%%%%%%%%%%%%%%%%%%%%%%%%%%%%%%%%%%%%%%%%
%
%		SECTION: pricing
%
%%%%%%%%%%%%%%%%%%%%%%%%%%%%%%%%%%%%%%%%%%%%%%%%%%%%

\section{Notations and preliminary results}
\label{sec:pricing}
We consider here a given probability space $\left(\Omega,(\mathcal{F}_t)_{t\geq 0},\mathbb{P}\right)$;
all the processes studied will be $\mathcal{F}$-adapted.
In particular $S=\E^X$ will denote the stock price process, namely a $\mathcal{F}$-adapted martingale under the risk-neutral probability measure $\mathbb{P}$.
The dynamics of $X$ may depend on some auxiliary process $Y\in\Rb^m$ ($m\geq 1$), 
say some stochastic volatility.
The starting point $(X_0,Y_0)=(x,y)$ is assumed to be non-random.
For simplicity and notational convenience, we will assume that $m=1$ and that the risk-free interest rate is zero.  

\subsection{Pricing via Fourier transforms}
Let $h$ be the payoff function of a European call option on $S=\E^{X}$ with strike $\E^{\zeta}$:
$h(z) \equiv (\E^z - \E^\zeta)^+$, and denote $\hh$ its (generalised) Fourier transform
$$
\hh(\lam) := \int_\Rb \E^{- \ii \lam z} h(z) \dd z
 = \frac{-\E^{\zeta-\ii \zeta \lam}}{ \ii \lam + \lam^2 } , 
\qquad\text{for }
\Im(\lam) < - 1.
$$
The results obtained below for option prices remain valid for Put options with payoff $h(z) \equiv (\E^\zeta - \E^z)^+$, 
but we shall chiefly consider European call option prices unless otherwise stated.
For any $t\geq 0$, define the moment explosions
$p^*(t) := \sup\{ p\geq 0 : \Eb_{x} \left(\E^{p X_t}\right) < \infty \}$
and
$q^*(t) := \sup\{ q\geq 0 : \Eb_{x} \left(\E^{-q X_t}\right) < \infty \}$.
Since~$S$ is a martingale, we have $p^*(t)\geq 1$ and $q^*(t)\geq 0$.
We shall further make the stronger assumption:
\begin{assumption}\label{ass:Explosions}
For any $t\geq 0$, $p^*(t)>1$ and $q^*(t)>0$.
\end{assumption}
This assumption holds for most models in practice,
and allows us to write  the value of a call option as
\begin{align}\label{eq:InvFourierPrice}
u(t,x) := \Eb_{x} \, h(X_t)
 = \frac{1}{2\pi} \int_\Rb \hh(\lam) \Eb_{x} \left(\E^{\ii \lam X_t}\right)\dd \lam_r,
\qquad \text{with }\Im(\lambda)\in (-p^*(t), -1), \text{ for all }t\geq 0,
\end{align}
where we write $\lam = \lam_r + \ii \lam_i $ ($\lambda_r, \lambda_i \in\mathbb{R}$) for a complex number.
Of course the function~$u$ also depends on~$y$, the starting point of~$Y$, but we shall omit it in the notations for clarity.
In this paper, we consider models for which the characteristic function 
$\Cb\ni\lambda\mapsto \Eb_{x} \left(\E^{\ii \lam X_t}\right)$  admits the representation
\begin{align}\label{eq:char.form}
\log \Eb_{x}\left(\E^{\ii \lam X_t}\right) = \ii \lam x + \phi(t,\lam),
\end{align}
for some analytic function $\phi:\Rb_+\times \Cb \to \Cb$, satisfying 
$\phi(t,-\ii)=0$ for all $t\geq 0$ (martingale property).
From~\eqref{eq:InvFourierPrice}, this implies that the price of a call option may be written as (see also~\cite{lewis2001simple} or~\cite{lipton2002})
\begin{align}
u(t,x) &= \frac{1}{2\pi} \int_\Rb \hh(\lam) \E^{ \ii \lam x + \phi(t,\lam)}\dd \lam_r.
\end{align}
Several well-known models fit within this class
\begin{align}
\text{L\'evy models}:&&
\phi(t,\lam)
	&=	t \( \ii \mu\lam - \tfrac{1}{2}a^2 \lam^2 + \int_\Rb \nu(dz) ( \E^{\ii \lam z} - 1 - \ii \lam z ) \) , \\
\text{Additive models}:&&
\phi(t,\lam)
	&=	\ii \mu(t)\lam - \tfrac{1}{2}a^2(t) \lam^2 + \int_\Rb \nu(t,dz) ( \E^{\ii \lam z} - 1 - \ii \lam z )  , \\
\text{Affine models}:&&
\phi(t,\lam)
	&=	C(t,\lam) + y D(t,\lam), \label{eq:affine}
\end{align}
where $(\mu,a^2,\nu)$ is a L\'evy triplet, $(\mu(t),a^2(t),\nu(t))$ are the spot characteristics of an additive process, the function $C$ is fully characterised by $\tfrac{\dd}{\dd t} C = D$ and the function $D$ satisfies a Riccati equation.
For precise details on L\'evy and affine processes, 
we refer the interested reader to the monograph by Sato~\cite{sato1999levy} and the groundbreaking paper by Duffie, Filipovi\'c and Schachermayer~\cite{DuffFilipScha}.

%%%%%%%%%%%%%%%%%%%%%%%%%%%%%%%%%%%%%%%%%%%%%%%%%%%%
%%%%%%%%%%%%%%%%%%%%%%%%%%%%%%%%%%%%%%%%%%%%%%%%%%%%
\subsection{Black-Scholes and implied volatility}
\label{subsec:imp.vol}
Option prices are commonly quoted in units of implied volatility (rather than in units of currency)
first because the latter is dimensionless, and second, because the shape and behaviour of the implied volatility
provide more information than option prices.
However, the implied volatility is scarcely available in closed form and has to be computed numerically via inversion of the Black-Scholes formula.
We derive here a closed-form expansion for the implied volatility for models whose characteristic function is of the form~\eqref{eq:char.form}.
We begin our analysis by defining the Black-Scholes price and the implied volatility.
\begin{definition}\label{def:BS}
The \emph{Black-Scholes price} $u^{\BS}:\Rb^+\times\Rb\times\Rb^+ \to \Rb^+$ is given by
\begin{align}
u^{\BS}(t,x,\sig_0) := \frac{1}{2\pi} \int_\Rb \E^{t \phi_0(\lam;\sig_0)} \hh(\lam) \E^{\ii \lam x}\dd \lam_r,
\qquad\text{where}\qquad
\phi_0(\lam;\sig_0) := -\frac{1}{2}\sig_0^2\left(\lam^2 + \ii \lam\right).
\label{eq:uBS}
\end{align}
\end{definition}
\begin{remark}
Note that $\phi_0(\cdot;\sig_0)$ is the L\'evy exponent of a Brownian motion with volatility $\sig_0$ and drift $-\frac{1}{2}\sig_0^2$, 
so that~\eqref{eq:uBS} is the Fourier representation of the usual Black-Scholes price, more typically written as
\begin{align}
u^{\BS}(t,x,\sig_0)	
	&= \E^x \Nc(d_+(x)) - \E^k \Nc(d_-(x)), & 
d_\pm(x) 
	&:=	\frac{1}{\sig_0 \sqrt{t}} \( x - \zeta \pm \frac{1}{2}\sig_0^2 t \), \label{eq:uBS.2}
\end{align}
where $\Nc$ is the cumulative distribution function of a standard normal random variable.
\end{remark}
\begin{definition}\label{def:imp.vol.def}
For any maturity $t$, starting point $x$ and (log) strike $\zeta$, 
the implied volatility is defined as the unique non negative real solution $\sig$ to the equation
$u^{\BS}(t,x,\sig) = u$, 
where $u$ is the (observed or computed) call option price with the same maturity and log strike.
\end{definition}
\begin{remark}
For any $t>0$, the existence and uniqueness of the implied volatility can be deduced using the general arbitrage bounds for call prices 
and the monotonicity of $u^{\BS}$ (see~\cite[Section 2.1, Remark (i)]{fpss}).
\end{remark}
For any $t\geq 0$, $x\in\Rb$, the function $u^{\BS}(t,x,\cdot)$ is analytic on $(0,\infty)$, and hence
for any~$\sig_0>0$ and~$\delta\in\Rb$ such that~$\sig_0 + \del>0$, 
the function $u^{\BS}(t, x, \cdot)$ at the point~$\sig_0 + \del$ is given by its Taylor series:
\begin{align}\label{rmk:analytic}
u^{\BS}(t,x,\sig_0 + \del) = \sum_{n=0}^\infty \frac{\del^n}{n!} \d_\sig^n u^{\BS}(t,x,\sig_0), 
\end{align}
where
$
\d_\sig^n u^{\BS}(t,x,\sig_0) 
= \frac{1}{2\pi} \int_\Rb \left.\(\d_\sig^n \E^{t \phi_0(\lam;\sig)} \)\right|_{\sigma=\sigma_0}
 \hh(\lam) \E^{\ii \lam x}\dd \lam_r.
$
The interchange of the derivative and integral operators is justified by Fubini's theorem.
If one observes the option price~$u$, 
then the following proposition provides a way to compute the corresponding implied volatility.
\begin{proposition}
\label{thm:lagrange}
For any $t> 0$, $x\in\Rb$, let $u:(0,\infty)\to(0,\infty)$ be defined (as a function of $\sig$) by $u^{\BS}(t, x, \sig)=u$,
and let $\sigma_0$ be some strictly positive real number.
Then the following expansion holds:
\begin{align}
\sig 
	&= \sig_0 + \sum_{n=1}^\infty \frac{b_n}{n!} (u - u^{\BS}(t, x, \sig_0))^n, &
b_n 
	&:= \lim\limits_{\sig \to \sig_0} \d_\sig^{n-1} \( \frac{ \sig - \sig_0}{u^{\BS}(t,x,\sig)-u^{\BS}(t,x,\sig_0)} \)^n .
			\label{eq:inverse}
\end{align}
%where
%$b_n := \lim\limits_{\sig \to \sig_0} \d_\sig^{n-1} \( \frac{ \sig - \sig_0}{u^{\BS}(t,x,\sig)-u^{\BS}(t,x,\sig_0)} \)^n$.
\end{proposition}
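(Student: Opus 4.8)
The plan is to recognize Proposition~\ref{thm:lagrange} as an instance of the Lagrange inversion theorem applied to the analytic function $\sig \mapsto u^{\BS}(t,x,\sig)$ near the point $\sig_0$. First I would fix $t>0$ and $x\in\Rb$, and set $F(\sig) := u^{\BS}(t,x,\sig)$. By the discussion preceding the statement, $F$ is analytic on $(0,\infty)$ and, by the monotonicity recalled in the remark after Definition~\ref{def:imp.vol.def}, strictly increasing; in particular $F'(\sig_0) = \d_\sig u^{\BS}(t,x,\sig_0) > 0$ (this is the Black-Scholes Vega, which is strictly positive for $t>0$). Hence $F$ admits a local analytic inverse in a neighbourhood of $w_0 := F(\sig_0)$, and by the implicit function / inverse function theorem for analytic functions this inverse $G$ is analytic near $w_0$, with $G(w_0) = \sig_0$. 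Since $u = F(\sig)$ defines $\sig$ as a function of $u$ with $\sig = G(u)$, the Taylor expansion of $G$ about $w_0$ gives precisely a series in powers of $(u - w_0) = (u - u^{\BS}(t,x,\sig_0))$.

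Next I would identify the Taylor coefficients of $G$ with the $b_n$ in~\eqref{eq:inverse}. The classical Lagrange inversion formula states that if $F(\sig) - w_0 = (\sig - \sig_0)\,\psi(\sig)$ with $\psi$ analytic and $\psi(\sig_0)\neq 0$, then
\begin{align}
G(w) = \sig_0 + \sum_{n=1}^\infty \frac{(w-w_0)^n}{n!}\, \lim_{\sig\to\sig_0} \d_\sig^{n-1}\!\( \frac{\sig - \sig_0}{F(\sig)-w_0} \)^{n}.
\end{align}
Here $\psi(\sig) = (F(\sig)-F(\sig_0))/(\sig-\sig_0)$ extends analytically across $\sig_0$ with $\psi(\sig_0) = F'(\sig_0) \neq 0$, so the hypothesis of the Lagrange formula is met, and the coefficient is exactly $b_n$ as written. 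One must also note that the limits defining $b_n$ are genuine limits of analytic functions: $\bigl((\sig-\sig_0)/(F(\sig)-w_0)\bigr)^n = \psi(\sig)^{-n}$ is analytic and non-vanishing near $\sig_0$, so the $(n-1)$-th derivative extends continuously and the $\lim$ is just evaluation at $\sig_0$. Substituting $w = u$ and $\sig = G(u)$ yields~\eqref{eq:inverse}.

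The main points requiring care — and the place I would expect the only real friction — are, first, confirming that Vega $\d_\sig u^{\BS}(t,x,\sig_0)$ is strictly positive for every $t>0$ so that the inversion is legitimate (this follows from~\eqref{eq:uBS.2}, since $\d_\sig u^{\BS} = \E^x \Nc'(d_+)\sqrt t > 0$, or equivalently from the representation in Definition~\ref{def:BS}), and second, addressing the radius of convergence: the expansion~\eqref{eq:inverse} holds for $u$ in the disc of convergence of $G$ around $w_0$, i.e.\ for $|u - u^{\BS}(t,x,\sig_0)|$ sufficiently small, which is exactly the sense in which an equality of the stated form can hold. I would state this convergence caveat explicitly. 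Everything else is a direct citation of Lagrange inversion (e.g.\ as found in standard references on analytic function theory or combinatorics); no model-specific input beyond analyticity and monotonicity of $u^{\BS}$ is needed, both of which are supplied in the excerpt.
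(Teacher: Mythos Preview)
Your proposal is correct and follows essentially the same approach as the paper: the paper's proof simply observes that $u^{\BS}(t,x,\cdot)$ is strictly increasing, analytic near $\sig_0$, and has nonzero derivative there, and then invokes the Lagrange Inversion Theorem (citing \cite[Equation 3.6.6]{as}). Your write-up fills in more detail (explicitly checking Vega is positive, identifying $\psi$, and spelling out the convergence caveat), but the underlying argument is identical.
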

\begin{proof}
Since the function $u^{\BS}(t,x,\cdot)$ is strictly increasing on~$(0,\infty)$, 
analytic in a neighbourhood of~$\sig_0$ 
and $\d_\sig u^{\BS}(\cdot,\cdot,\sig_0) \neq 0$, 
the proposition follows from Lagrange Inversion Theorem~\cite[Equation 3.6.6]{as}.
\end{proof}
Proposition~\ref{thm:lagrange} shows that, for every fixed $t> 0$, $x\in\Rb$, $\sig_0>0$, 
there exists some radius of convergence $R>0$ (depending on~$(t,x,\zeta)$) such that 
$|u - u^{\BS}(t,x,\sig_0)|<R$ 
implies that $\sig$, defined implicitly through the equation $u^{\BS}(t,x,\sig)=u$, is fully characterised by~\eqref{eq:inverse}.
This result however seems to be only of theoretical interest. 
Once the option value~$u$ is known, computing the implied volatility inverting the Black-Scholes formula is a simple numerical exercise.
Moreover, computing the implied volatility using~\eqref{eq:inverse} is not numerically efficient since
the option price~$u$ requires the computation of a (possibly highly oscillatory) Fourier integral.
One may wish to use~\eqref{eq:inverse} to deduce some properties of the implied volatility, 
but then the proposition would benefit from precise error bounds when truncating the infinite sum.
The rest of the paper focuses on developing a similar expansion, 
without the need for the (potentially computer-intensive) implementation of the value function $u$.

%%%%%%%%%%%%%%%%%%%%%%%%%%%%%%%%%%%%%%%%
%%%%%%%%%%%%%%%%%%%%%%%%%%%%%%%%%%%%%%%%
\section{Implied volatility expansions}
\label{sec:iv}

%%%%%%%%%%%%%%%%%%%%%%%%%%%%%%%%%%%%%%%%
%%%%%%%%%%%%%%%%%%%%%%%%%%%%%%%%%%%%%%%%
\subsection{Call prices as perturbations around Black-Scholes}
\label{sec:u.expand}
For any $\eps\in (0,1]$ and $\sigma_0>0$ define the function $\phi^\eps(\cdot,\cdot;\sigma_0):\Rb_+\times\Cb\to\Cb$ by
$$
\phi^\eps(t,\lam;\sig_0) :=t \, \phi_0(\lam;\sig_0) + \eps \, \phi_1(t,\lam;\sig_0) ,
$$
where $\ee^{t\phi_0}$
is the Black-Scholes characteristic function from Definition~\ref{def:BS} and
\begin{equation}\label{eq:phi1}
\phi_1(t,\lam;\sig_0) := \phi(t,\lam) - t \phi_0(\lam;\sig_0).
\end{equation} 
Recall from Bochner theorem~\cite[Theorem 4.2.2]{Lukacs} that a complex-valued function~$f$
is a characteristic function if and only if it is non-negative definite and $f(0)=1$.
Therefore~$\E^{\phi^\eps}$ is a well-defined characteristic function for any $t\geq 0$, and
we can associate to it a (unique up to indistinguishability) stochastic process $(X_t^{\eps,\sigma_0})_{t\geq 0}$, 
starting at $X_0^{\eps,\sigma_0}=x$, which is a true martingale.
The price $u^\eps$ of a call option written on~$X^{\eps,\sigma_0}$ thus reads
\begin{align}\label{eq:u.eps}
u^\eps(t,x,\sig_0) :=	\frac{1}{2\pi} \int_\Rb \dd \lam_r \, \E^{\phi^\eps(t,\lam;\sig_0)} \hh(\lam) \E^{\ii \lam x}.
\end{align}
Let $\sig^\eps$ denote the implied volatility corresponding to the option price~$u^\eps(t,x,\sig_0)$.
Since $\phi^\eps |_{\eps=1} = \phi$ and $u^\eps |_{\eps=1} = u$, 
the implied volatility corresponding to the option price~$u$ is given by $\sig = \sig^\eps |_{\eps=1}$.  
We now seek an expression for $\sig^\eps$.  
The first step is to show that $u^\eps$ can be written as a power series in $\eps$, 
whose first term corresponds to the Black-Scholes call price with volatility $\sig_0$.  
To this end, we first expand $\E^{\phi^\eps(t,\lam;\sig_0)}$ as
$$
\exp\left(\phi^\eps(t,\lam;\sig_0)\right)
 = \E^{t \phi_0(\lam;\sig_0)} \sum_{n=0}^\infty \frac{1}{n!} \eps^n \phi_1^n(t,\lam;\sig_0),
$$
and deduce a series representation for $u^\eps$ in~\eqref{eq:u.eps}:
\begin{align}\label{eq:u.eps.expand}
u^\eps(t,x,\sig_0) = \sum_{n=0}^\infty \eps^n u_n(t,x,\sig_0) ,
\qquad\text{with}\qquad
u_n(t,x,\sig_0) := \frac{1}{n!} \frac{1}{2 \pi} 
\int_\Rb \dd \lam_r \E^{t \phi_0(\lam;\sig_0)}  \phi_1^n(t,\lam;\sig_0) \hh(\lam) \E^{ \ii \lam x},
\end{align}
for any $n\geq 0$, where the application of Fubini's theorem is justified since 
$\int_\Rb  \left|\E^{t \phi^\eps(\lam)} \hh(\lam) \E^{\ii \lam x}\right|\dd \lam_r$ is finite.
Note in particular that $u_0 \equiv u^{\BS}$.

%%%%%%%%%%%%%%%%%%%%%%%%%%%%%%%%%%%%%%%%%%%%%%%%%%%
\subsection{Series expansion for implied volatility}
\label{sec:sig.expand}
From \eqref{eq:u.eps.expand}, it is clear that $u^\eps$ is an analytic function of $\eps$ (we have explicitly provided its power series representation).  
Since the composition of two analytic functions is also analytic~\cite[Section 24, p. 74]{brown1996complex}, 
the expansion~\eqref{rmk:analytic} implies that $\sig^\eps = [u^{\BS}]^{-1}(u^\eps)$ is an analytic function and therefore has a power series expansion in $\eps$, which we write 
$\sig^\eps := \sig_0 + \del^\eps$, where
$\del^\eps = \sum_{k\geq 1}\eps^k \sig_k$.
The following proposition provides an expansion formula for the coefficients~$\sigma_k$.
\begin{proposition}
Fix $\sigma_0>0$, $k\geq 1$, and let $R$ denote the radius of convergence of the expansion~\eqref{eq:inverse}.
If $\left|u(t,x)-u^{\BS}(t,x,\sig_0)\right|<R$ for all $(t,x)\in\mathbb{R}_+\times \mathbb{R}$, then the following expansion holds:
\begin{align}\label{eq:sig.k}
\sig_k  = \frac{1}{\d_\sig u^{\BS}(t,x,\sig_0)}
\( u_k - \sum_{n=2}^\infty \frac{1}{n!}\( \sum_{j_1+\cdots+j_n=k} \prod_{i=1}^n \sig_{j_i} \)  \d_\sig^n u^{\BS}(t,x,\sig_0)\). 
\end{align}
\end{proposition}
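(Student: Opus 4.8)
I would derive \eqref{eq:sig.k} by substituting the two power-series expansions into the defining identity $u^{\BS}(t,x,\sig^\eps) = u^\eps(t,x,\sig_0)$ and matching coefficients of $\eps^k$. First I would write $\sig^\eps = \sig_0 + \del^\eps$ with $\del^\eps = \sum_{j\geq 1}\eps^j\sig_j$, and invoke the analyticity established just above the statement: the hypothesis $|u(t,x)-u^{\BS}(t,x,\sig_0)|<R$ guarantees that $\del^\eps$ lies in a neighbourhood of $0$ where the Taylor expansion \eqref{rmk:analytic} of $u^{\BS}(t,x,\cdot)$ about $\sig_0$ is valid along the whole path $\eps\mapsto\sig^\eps$, $\eps\in[0,1]$. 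Then
\begin{align}
u^\eps(t,x,\sig_0) = u^{\BS}(t,x,\sig_0+\del^\eps) = \sum_{n=0}^\infty \frac{(\del^\eps)^n}{n!}\,\d_\sig^n u^{\BS}(t,x,\sig_0),
\end{align}
and on the left-hand side I would use \eqref{eq:u.eps.expand}, namely $u^\eps = \sum_{n\geq 0}\eps^n u_n$ with $u_0 = u^{\BS}(t,x,\sig_0)$.

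Next I would expand $(\del^\eps)^n = \big(\sum_{j\geq 1}\eps^j\sig_j\big)^n$ using the multinomial/Cauchy-product formula, so that the coefficient of $\eps^k$ in $(\del^\eps)^n$ is $\sum_{j_1+\cdots+j_n=k}\prod_{i=1}^n \sig_{j_i}$ (with each $j_i\geq 1$, which automatically forces $n\leq k$ so the inner sum is finite and in fact the outer sum over $n$ truncates at $n=k$). Matching the coefficient of $\eps^k$ on both sides, the $n=0$ term contributes only at order $\eps^0$ and drops out; the $n=1$ term contributes $\sig_k\,\d_\sig u^{\BS}(t,x,\sig_0)$; and the $n\geq 2$ terms contribute $\sum_{n=2}^\infty \frac1{n!}\big(\sum_{j_1+\cdots+j_n=k}\prod_i\sig_{j_i}\big)\d_\sig^n u^{\BS}(t,x,\sig_0)$. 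Setting this equal to $u_k$ and solving for $\sig_k$ — which is legitimate since $\d_\sig u^{\BS}(t,x,\sig_0)\neq 0$ by the strict monotonicity of $u^{\BS}$ in $\sig$, already used in Proposition~\ref{thm:lagrange} — yields exactly \eqref{eq:sig.k}. Note the formula is genuinely recursive: the right-hand side involves only $\sig_1,\ldots,\sig_{k-1}$, since in any term with $n\geq 2$ each index $j_i\leq k-1$.

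**Main obstacle.** The only delicate point is the justification of rearranging a composition of two power series and equating coefficients termwise, i.e.\ that one may legitimately substitute $\del^\eps = \sum_j\eps^j\sig_j$ into the convergent Taylor series $\sum_n (\cdot)^n\d_\sig^n u^{\BS}/n!$ and collect powers of $\eps$. This is where the radius-of-convergence hypothesis $|u(t,x)-u^{\BS}(t,x,\sig_0)|<R$ does its work: by Proposition~\ref{thm:lagrange} it places $\sig = \sig^\eps|_{\eps=1}$ (and hence, by the intermediate-value/monotonicity argument, every $\sig^\eps$ for $\eps\in[0,1]$) within the disc of convergence, so $\del^\eps$ stays small enough that absolute convergence holds and Fubini/Mertens-type rearrangement of the double series is permitted. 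I would state this as the one substantive step and otherwise present the coefficient extraction as a formal but rigorous bookkeeping exercise. A secondary (purely cosmetic) point is to remark that the sums over $n$ in \eqref{eq:sig.k} are in fact finite for each fixed $k$, so no convergence question arises in the final formula itself.
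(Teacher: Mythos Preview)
Your proposal is correct and follows essentially the same approach as the paper: Taylor expand $u^{\BS}(t,x,\sig_0+\del^\eps)$ about $\sig_0$, substitute the power series $\del^\eps=\sum_{j\geq1}\eps^j\sig_j$, collect powers of $\eps$, and equate with the expansion \eqref{eq:u.eps.expand} of $u^\eps$. Your additional remarks on the convergence/rearrangement justification and on the finiteness of the $n$-sum are more explicit than what the paper writes, but do not alter the argument.
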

The right-hand side only involves $\sig_j$ for $j \leq k-1$, so that the sequence can be determined recursively.
\begin{proof}
Let us fix some $t\geq 0$ and $x\in\Rb$.
Taylor expanding $u^{\BS}(t,x,\sig^\eps)$ around the point $\sig_0$ we obtain
\begin{align*}
u^{\BS}(t,x,\sig^\eps)
	&=	u^{\BS}(t,x,\sig_0 + \del^\eps) 
	=	\sum_{n=0}^\infty \frac{1}{n!}(\del^\eps \d_\sig )^n u^{\BS}(t,x,\sig_0) \\
	&=	u^{\BS}(t,x,\sig_0) +	\sum_{n=1}^\infty \frac{1}{n!} \( \sum_{k=1}^\infty \eps^k \sig_k \)^n \d_\sig^n u^{\BS}(t,x,\sig_0) \\
	&=	u^{\BS}(t,x,\sig_0) + \sum_{n=1}^\infty \frac{1}{n!}  
		\[ \sum_{k=1}^\infty \( \sum_{j_1+\cdots+j_n=k} \prod_{i=1}^n \sig_{j_i} \) \eps^k \] \d_\sig^n u^{\BS}(t,x,\sig_0) \\
	&=	u^{\BS}(t,x,\sig_0) +	\sum_{k=1}^\infty \eps^k 
		\[ \sum_{n=1}^\infty \frac{1}{n!} \( \sum_{j_1+\cdots+j_n=k} \prod_{i=1}^n \sig_{j_i} \) \d_\sig^n \] u^{\BS}(t,x,\sig_0) \\
	&=	u^{\BS}(t,x,\sig_0) +	\sum_{k=1}^\infty \eps^k 
		\[ \sig_k \d_\sig + \sum_{n=2}^\infty \frac{1}{n!}\( \sum_{j_1+\cdots+j_n=k} \prod_{i=1}^n \sig_{j_i} \)  \d_\sig^n \] u^{\BS}(t,x,\sig_0).
\end{align*}
In order to recover the implied volatility from Definition~\ref{def:imp.vol.def}, 
we need to equate the Black-Scholes call price above and 
the option value~$u^\eps$ in~\eqref{eq:u.eps.expand}, and collect terms of identical powers of $\eps$:
\begin{align}
\Oc(1):&&
u_0(t,x,\sig_0) &= u^{\BS}(t,x,\sig_0) , \\
\Oc(\eps^k):&&
u_k(t,x,\sig_0)
	&=		\sig_k \d_\sig u^{\BS}(t,x,\sig_0)
				+ \sum_{n=2}^\infty \frac{1}{n!}\( \sum_{j_1+\cdots+j_n=k} \prod_{i=1}^n \sig_{j_i} \)  \d_\sig^n	u^{\BS}(t,x,\sig_0) , &
k & \geq 1 .
\end{align}
Solving the above equations for the sequence $(\sig_k)_{k\geq 0}$, we find
$\sigma_0=\sigma_0$ at the zeroth order and for any $k\geq 1$, the $\Oc(\eps^k)$ order is given by~\eqref{eq:sig.k}.
\end{proof}
\begin{remark}
%Explicitly, up to $\Oc(\eps^5)$ we have
Explicitly, up to $\Oc(\eps^3)$ we have
\begin{align}
\begin{aligned}
\Oc(\eps):&&
\sig_1
	&= 	\frac{1}{\d_\sig u^\BS} u_1, \\
\Oc(\eps^2):&&
\sig_2
	&= 	\frac{1}{\d_\sig u^\BS} \Big( u_2 - \tfrac{1}{2} \sig_1^2 \d_\sig^2 u^\BS \Big), \\
\Oc(\eps^3):&&
\sig_3
	&= 	\frac{1}{\d_\sig u^\BS} \Big( u_3 - \(\sig_2 \sig_1 \d_\sig^2 + \tfrac{1}{3!}\sig_1^3 \d_\sig^3 \) u^\BS \Big), %\\
%\Oc(\eps^4):&&
%\sig_4
	%&= \frac{1}{\d_\sig u^\BS} \Big( u_4 - \left(\tfrac{1}{2}\left(2 \sig_3 \sig_1 + \sig_2^2\right) \d_\sig^2 
%+ \tfrac{1}{2} \sig_2 \sig_1^2 \d_\sig^3 + \tfrac{1}{4!} \sig_1^4 \d_\sig^4 \right) u^\BS \Big) , \\
%\Oc(\eps^5):&&
%\sig_5
	%&=	\frac{1}{\d_\sig u^\BS} \bigg( u_5
			%- \Big(\left(\sig_1 \sig_4 + \sig_2 \sig_3\right) \d_{\sig}^2
			%+ \tfrac{1}{2} \left(\sig_1^2 \sig_3 + \sig_1 \sig_2^2 \right) \d_{\sig}^3
			%+ \tfrac{1}{3} \sig_1^3 \sig_2 \d_{\sig}^4
			%+ \tfrac{1}{5!} \sig_1^5 \d_{\sig}^5 \Big) u^\BS
			%\bigg),
\end{aligned}
\label{eq:sigmas}
\end{align}
where all the functions $u_1,\ldots,u_5$ and $u^\BS$ are evaluated at $(t,x,\sigma_0)$.
\end{remark}
\begin{remark}
Having served its purpose, we now dial $\eps$ up to one.  
The implied volatility is then given by
$\sig = \sum_{k\geq 0}\sig_k$, 
where $\sig_0$ is a fixed positive constant and 
where the sequence~$(\sig_k)_{k\geq 1}$ is given by~\eqref{eq:sig.k}.
\end{remark}

%%%%%%%%%%%%%%%%%%%%%%%%%%%%%%%%%%%%%%%%%%%%%%%%5
\subsection{Simplification of the expressions for $\sig_k$}
\label{sec:simple}
The expression for the coefficients $\sig_k$ ($k\geq 1$) in \eqref{eq:sig.k} is not straightforward to apply; 
one needs to compute first the Fourier integrals $u_j$  ($j\leq k$) via~\eqref{eq:u.eps.expand}, 
then all the terms of the form $\d_\sig^j u^{\BS}$ ($j\leq k$). 
We provide now a more explicit approximation---without integrals or special functions---for $\sig_k$. 
The key to this simplification is that all the terms $\partial^i_\sigma u^{\BS}$ and $u_i$ ($i\in\mathbb{N}$)
in~\eqref{eq:sig.k} can actually be expressed in terms of derivatives of $u^{\BS}$ with respect to $x$,
the starting point of the log stock price process.
Indeed, the classical Black-Scholes relation between the Delta, the Gamma and the Vega for call options,
$\left.\d_\sig u^{\BS}(t,x,\sig)\right|_{\sigma=\sigma_0} = t \sig_0 ( \d_x^2 - \d_x ) u^{\BS}(t,x,\sig_0) $, 
implies that the derivative~$\d_\sig^k u^{\BS}$ can be expressed as a sum of terms of the form 
$a_{k_i} \d_x^{k_i} (\d_x^2 - \d_x) u^{\BS}$.
We shall also use the equality
$p(\lam) \E^{\ii \lam x} = p(- \ii \d_x) \E^{\ii \lam x}$, which holds for any polynomial $p$
(and actually for any analytic function---simply take $p$ to be its power series).  
We first start with the following theorem, which provides an approximation for the coefficients~$u_n$ in~\eqref{eq:u.eps.expand} 
as a differential operator acting on $u^{\BS}$.
\begin{theorem}\label{thm:SimpleUn}
Fix some $t\geq 0$ and $\sigma_0>0$.
If the power series
$\phi_1(t,\lam;\sig_0) = \sum_{k\geq 1} a_k(t;\sig_0) (\ii \lam)^k$
holds in a complex neighbourhood of the origin, 
then for any integer $m\geq 2$, $u_n$ defined in~\eqref{eq:u.eps.expand} can be written as
$u_n(t,x,\sigma_0)  = u_n^{(m)}(t,x,\sigma_0) + \eps_n^{(m)}(t,x,\sigma_0)$,
where 
\begin{align}
u_n^{(m)}(t,x,\sig_0)
	&=	\frac{1}{n!}\( \sum_{k=2}^m a_k(t,\sig_0) (\d_x^k - \d_x )\)^n u^{\BS}(t,x,\sig_0),
\label{eq:u.n.m}
\end{align}
and where $\varepsilon_n^{(m)}$ only contains derivatives (with respect to~$x$) of $u^{\BS}$ of order higher than~$n$.
\end{theorem}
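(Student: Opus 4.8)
The plan is to turn the Fourier‑multiplier operator that generates $u_n$ in~\eqref{eq:u.eps.expand} into a differential operator in the log‑spot variable $x$, truncate that operator at order $m$ (which isolates $u_n^{(m)}$), and then run a degree count on the remainder $\eps_n^{(m)}$.

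First, since $\d_x^k\E^{\ii\lam x}=(\ii\lam)^k\E^{\ii\lam x}$, the hypothesis $\phi_1(t,\lam;\sig_0)=\sum_{k\geq1}a_k(t;\sig_0)(\ii\lam)^k$ gives, on the region of convergence, $\phi_1(t,\lam;\sig_0)\,\E^{\ii\lam x}=\big(\sum_{k\geq1}a_k\,\d_x^k\big)\E^{\ii\lam x}$ and hence $\phi_1^n(t,\lam;\sig_0)\,\E^{\ii\lam x}=\big(\sum_{k\geq1}a_k\,\d_x^k\big)^n\E^{\ii\lam x}$. The martingale property $\phi(t,-\ii)=0$ forces $\phi_1(t,-\ii;\sig_0)=0$, i.e. $\sum_{k\geq1}a_k=0$, so $a_1=-\sum_{k\geq2}a_k$ and $\sum_{k\geq1}a_k\,\d_x^k=\sum_{k\geq2}a_k(\d_x^k-\d_x)$. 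Plugging this into~\eqref{eq:u.eps.expand} and (formally) moving the operator outside the integral would give $u_n=\tfrac1{n!}\big(\sum_{k\geq2}a_k(\d_x^k-\d_x)\big)^n u^{\BS}$. I then split $\sum_{k\geq2}a_k(\d_x^k-\d_x)=A_m+B_m$ with $A_m:=\sum_{k=2}^m a_k(\d_x^k-\d_x)$ of order $m$ and $B_m:=\sum_{k>m}a_k(\d_x^k-\d_x)$; expanding $(A_m+B_m)^n$, the summand $A_m^n$ applied to $u^{\BS}$ is exactly $n!\,u_n^{(m)}$ of~\eqref{eq:u.n.m}, and the remaining summands—each carrying at least one factor $B_m$—sum, after acting on $u^{\BS}$, to $\eps_n^{(m)}:=u_n-u_n^{(m)}$ by definition.

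The degree count is then elementary. For every $k\geq2$ one has $\d_x^k-\d_x=(\d_x^2-\d_x)(1+\d_x+\cdots+\d_x^{k-2})$, so each of the $n$ factors $(\d_x^{k_i}-\d_x)$ in a generic summand of $(A_m+B_m)^n$ is divisible by $(\d_x^2-\d_x)$ and, when expanded, contributes only powers $\d_x^{j}$ with $j\geq1$; hence a product of $n$ such factors acting on $u^{\BS}$ is a linear combination of $\d_x^{J}u^{\BS}$ with $J\geq n$. Thus $\eps_n^{(m)}$—just like $u_n^{(m)}$ and $u_n$ themselves—involves only $x$-derivatives of $u^{\BS}$ of order at least $n$, and by construction each of its monomials carries at least one coefficient $a_k$ with $k>m$; that it is in fact divisible by $(\d_x^2-\d_x)^n$ is moreover precisely what, further on, turns these expressions into polynomials in the log strike after division by $\d_\sig u^{\BS}=t\sig_0(\d_x^2-\d_x)u^{\BS}$.

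The one genuinely delicate step, which I expect to be the main obstacle, is that the hypothesis only grants convergence of the series for $\phi_1$ near $\lam=0$, whereas the integral in~\eqref{eq:u.eps.expand} runs along the line $\Im\lam\in(-p^*(t),-1)$, so $\sum_{k\geq2}a_k(\d_x^k-\d_x)$ is of infinite order and cannot be moved through the integral term by term (the finite‑order polynomial $A_m^n$ is harmless—that interchange is the same absolute integrability that already justifies~\eqref{eq:u.eps.expand}). Rather than justify the formal manipulation above, I would treat $\eps_n^{(m)}$ directly as a Fourier integral: since $u_n^{(m)}=\tfrac1{2\pi n!}\int_{\Rb}\E^{t\phi_0(\lam;\sig_0)}\psi_m(\lam)^n\hh(\lam)\E^{\ii\lam x}\dd\lam_r$ with $\psi_m(\lam):=\sum_{k=2}^m a_k\big((\ii\lam)^k-\ii\lam\big)$ the symbol of $A_m$, one has
\begin{align*}
\eps_n^{(m)}(t,x,\sig_0)=\frac1{2\pi n!}\int_{\Rb}\E^{t\phi_0(\lam;\sig_0)}\big(\phi_1^n(t,\lam;\sig_0)-\psi_m(\lam)^n\big)\hh(\lam)\,\E^{\ii\lam x}\,\dd\lam_r .
\end{align*}
Both $\phi_1(t,\cdot;\sig_0)$ and $\psi_m$ vanish at $\lam=0$ and $\lam=-\ii$ (the former by the martingale property, the latter by inspection), and by Assumption~\ref{ass:Explosions} $\phi_1$ is analytic on the strip containing $0$, $-\ii$ and the contour; writing $\phi_1=((\ii\lam)^2-\ii\lam)\chi$ and $\psi_m=((\ii\lam)^2-\ii\lam)q_m$ with $\chi$ analytic on that strip and $q_m$ a polynomial, one gets $\phi_1^n-\psi_m^n=((\ii\lam)^2-\ii\lam)^n(\chi^n-q_m^n)$. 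Because $((\ii\lam)^2-\ii\lam)^n\E^{\ii\lam x}=(\d_x^2-\d_x)^n\E^{\ii\lam x}$ and the residual integrand $\E^{t\phi_0}(\chi^n-q_m^n)\hh\E^{\ii\lam x}$ is absolutely integrable (the Gaussian factor $|\E^{t\phi_0(\lam;\sig_0)}|\sim\E^{-\sig_0^2 t\lam_r^2/2}$ dominating the at‑most‑polynomial growth of $\chi^n-q_m^n$ and the $O(|\lam|^{-2})$ decay of $\hh$), the operator comes out of the integral, giving $\eps_n^{(m)}=\tfrac1{n!}(\d_x^2-\d_x)^n W$ with $W$ the corresponding Fourier integral; as $(\d_x^2-\d_x)^n$ carries only derivatives of order between $n$ and $2n$, this already yields the localization, the further expansion of $W$ itself in $x$-derivatives of $u^{\BS}$ being immediate when the Taylor series of $\chi$ reaches the contour (in particular when $\phi$ is entire, e.g. the Merton model) and understood in the same formal sense used throughout otherwise. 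What remains is a short algebraic check with the martingale relation that no $x$-derivative of order below $n$ survives and that only coefficients $a_k$ with $k>m$ enter.
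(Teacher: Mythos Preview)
Your proof is correct and follows essentially the same route as the paper: convert the Fourier multiplier $\phi_1$ into a differential operator in $x$, use the martingale identity $\sum_{k\geq 1}a_k=0$ to rewrite $\sum_{k\geq 1}a_k(\ii\lam)^k=\sum_{k\geq 2}a_k\big((\ii\lam)^k-\ii\lam\big)$, truncate at order $m$, expand $(\phi_1^{(m)}+\delta\phi_1^{(m)})^n$ binomially, and read off $u_n^{(m)}$ as the leading term.

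Where you differ is in the treatment of the remainder. The paper controls $\delta\phi_1^{(m)}$ via the Cauchy integral representation of the Taylor remainder and then argues (somewhat loosely) that the resulting symbol is $O(\lam^n)$ as $\lam\to 0$, whence the minimum derivative order. Your factorisation $\phi_1=((\ii\lam)^2-\ii\lam)\chi$, $\psi_m=((\ii\lam)^2-\ii\lam)q_m$, giving $\phi_1^n-\psi_m^n=((\ii\lam)^2-\ii\lam)^n(\chi^n-q_m^n)$, is cleaner: it pulls out exactly $(\d_x^2-\d_x)^n$ and makes the order-$n$ claim immediate, while also flagging the genuine analytic issue (the Taylor series for $\phi_1$ need not converge on the contour of integration) that the paper glosses over. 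Your integrability check for the residual integrand is the right way to close that gap, with the caveat---which you note---that the at-most-polynomial growth of $\chi$ along the contour is model-dependent and not guaranteed by the hypotheses alone. Both arguments yield ``order at least $n$'' rather than the ``higher than $n$'' in the statement; this matches the paper's own proof.
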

\begin{remark}
Note that the power series for $\phi_1$ in the theorem starts at $k=1$, which
follows from the fact that the process $\exp(X^{\eps,\sig_0})$ is conservative.
This expansion holds as soon as 
all the moments of $X_t$ exist and 
$\lim_{k\uparrow \infty}|\lambda|^k\Eb\left(|X_t|^k\right)/k! = 0$
for $|\lambda|$ small enough, which is valid under Assumption~\ref{ass:Explosions}.
\end{remark}

\begin{proof}
Assume that the power series for $\phi_1(t,\cdot;\sig_0)$ holds around the origin, 
where the coefficients read
\begin{align}\label{eq:Coefak}
a_k(t;\sigma_0) = \frac{(-\ii)^k}{k!}\left.\partial_\lambda ^k \phi_1 (t,\lambda;\sigma_0)\right|_{\lambda=0}.
\end{align}
The martingale condition implies
$\phi_1(t,-\ii;\sig_0) = \sum_{k\geq 1} a_k(t;\sig_0) = 0$, and hence
\begin{align}
\phi_1(t,\lam;\sig_0)  & = \sum_{k=1}^\infty a_k(t;\sig_0) (\ii \lam)^k
  = \ii \lam a_1(t;\sig_0) + \sum_{k=2}^\infty a_k(t;\sig_0)\left[ (\ii \lam)^k - \ii \lam \right] + \ii\lambda\sum_{k=2}^\infty a_k(t;\sig_0)\\
	&=	\sum_{k=2}^\infty a_k(t;\sig_0) \left[ (\ii \lam)^k - \ii \lam \right]. \label{eq:decomposition}
\end{align}
Let now $\phi_1^{(m)}(t,\cdot;\sigma_0):\Cb\to\Cb$ be the truncation of the series~\eqref{eq:decomposition} at the $m$-th order, i.e.
\begin{align}\label{eq:defPhi1m}
\phi_1^{(m)}(t,\lam;\sig_0) := \sum_{k=2}^m a_k(t;\sig_0) \left[ (\ii \lam)^k - \ii \lam \right],
\end{align}
and define the operator $\delta$ acting on $\phi_1^{(m)}$ by
\begin{align}
\del\phi_1^{(m)}(t,\lam;\sig_0) 
	&:=	\phi_1(t,\lam;\sig_0) - \phi_1^{(m)}(t,\lam;\sig_0) 
	= 	\frac{\lam^{m+1}}{2 \pi \ii} \int_{\d\Gam} \frac{\phi_1(t,z;\sig_0) \dd z}{z^{m+1}(z-\lam)}
 + \ii \lam \sum_{k=1}^{m} a_k,
\end{align}
where $\Gam$ is a closed set within the radius of convergence of $\phi_1$, 
and the integral is nothing else than the remainder of the series expansion around the point $\lambda=0$.
Hence for any $n\geq 1$, $u_n$ in~\eqref{eq:u.eps.expand} can be written as
\begin{align}
u_n(t,x,\sig_0)
	&=	\frac{1}{n!} \frac{1}{2\pi} \int_\Rb \dd \lam_r \, 
			\E^{t \phi_0(t,\lam,\sig_0)} \hh(\lam) \( \phi_1(t,\lam;\sig_0) \)^n \, \E^{ \ii \lam x} \\
	&=	\frac{1}{n!} \frac{1}{2\pi} \int_\Rb \dd \lam_r \, 
			\E^{t \phi_0(\lam;\sig_0)} \hh(\lam) 
			\( \phi_1^{(m)}(t,\lam;\sig_0) + \del \phi_1^{(m)}(t,\lam;\sig_0) \)^n \, \E^{ \ii \lam x} \label{eq:un.int} \\
	&=	\frac{1}{n!} \( \phi_1^{(m)}(t,-\ii \d_x;\sig_0) \)^n \frac{1}{2\pi} \int_\Rb \dd \lam_r \, 
			\E^{t \phi_0(\lam;\sig_0)} \hh(\lam) \E^{ \ii \lam x} + \eps_n^{(m)}(t,x,\sig_0) \\
	&=	\frac{1}{n!} \( \phi_1^{(m)}(t,-\ii \d_x;\sig_0) \)^n u^{\BS}(t,x,\sig_0) + \eps_n^{(m)}(\sig_0)
	= 	u_n^{(m)}(t,x,\sig_0) + \eps_n^{(m)}(t,x,\sig_0),
\end{align}
where
\begin{align}
u_n^{(m)}(t,x,\sig_0)
	&:=	\frac{1}{n!} \( \phi_1^{(m)}(t,-\ii \d_x;\sig_0) \)^n u^{\BS}(t,x,\sig_0) , \\
\eps_n^{(m)}(t,x,\sig_0)
	&:=	\frac{1}{n!} \sum_{k=1}^n \binom{n}{k} \(\phi_1^{(m)}(t,-\ii \d_x;\sig_0)\)^{n-k} \(\del\phi_1^{(m)}(t,-\ii \d_x;\sig_0)\)^k u^{\BS}(t,x,\sig_0) .
\end{align}
From the decomposition~\eqref{eq:defPhi1m}, we can write 
$\phi_1^{(m)}(t,-\ii \d_x;\sig_0) = \sum_{k=2}^m a_k(t;\sig_0) (\d_x^k - \d_x)$,
where the coefficients~$a_k$ are defined in~\eqref{eq:Coefak}.
We can now compute
\begin{align}
u_n^{(m)}(t,x,\sig_0) & = \frac{1}{n!} \( \phi_1^{(m)}(t,-\ii \d_x;\sig_0) \)^n u^{\BS}(t,x,\sig_0)
 = \frac{1}{n!} \left(\sum_{k=2}^m a_k(t;\sig_0) (\d_x^k - \d_x)\right)^n u^{\BS}(t,x,\sig_0) ,
 %& = \frac{1}{n!} \sum_{\substack{2\leq k_2,\ldots, k_m\leq k \\ k_2+\ldots+k_m=n}}
%\frac{n!}{k_2! \ldots k_m!}
%\prod_{2\leq j \leq m}a_{j}^{k_j}(t;\sigma_0) \left(\d_x^j - \d_x\right)^{k_j}u^{\BS}(t,x,\sig_0). 
\label{eq:un.m}
\end{align}
which is precisely the expression given in \eqref{eq:u.n.m}.
%\red{Defining $\gam_n^{(m)}$ as in the theorem, the result follows.}
Regarding $\varepsilon_n^{(m)}$, since for any $z \in \Gam$, there exists $M>0$ such that $\left|\phi_1(t,z;\sig_0)/(z-\lam)\right| \leq M$, we have
$\left|\del\phi_1^{(m)}(t,\lambda;\sig_0) \right| \leq M \left(|\lam|/R \right)^{m+1} + |\lambda| \left|\sum_{k=1}^{m} a_k\right|$,
where $R$ denotes the radius of convergence of~$\phi_1$.
The sequence $(a_k)_{k\geq 1}$ is (eventually) decreasing and the sum tends to zero as $m$ tends to infinity, 
so that the sum can be made arbitrarily small.
We then obtain
\begin{align}
\left|\eps_n^{(m)}(t,x,\sig_0)\right| & 
\leq \frac{1}{n!} \sum_{k=1}^n \binom{n}{k} \left|\phi_1^{(m)}(t,-\ii \d_x;\sig_0)\right|^{n-k} 
\left(M \left(|\lam|/R \right)^{m+1} + |\lambda| \left|\sum_{k=1}^{m} a_k\right|\right)^k u^{\BS}(t,x,\sig_0)\\
& \leq \frac{1}{n!} \sum_{k=1}^n \binom{n}{k} \left|\sum_{j=2}^m a_j(t;\sig_0) \left((\ii \lam)^j - \ii \lam \right)\right|^{n-k} 
\left(M \left(|\lam|/R \right)^{m+1} + |\lambda| \left|\sum_{k=1}^{m} a_k\right|\right)^k u^{\BS}(t,x,\sig_0).
\end{align}
One can then readily check that the sum behaves as $\Oc\left(\lambda^n\right)$ as $\lambda$ tends to zero.
Therefore, $\varepsilon_n^{(m)}$ contains derivatives (with respect to~$x$) of $u^{\BS}$ of at least order $n$.
\end{proof}
\noindent
Expression \eqref{eq:u.n.m} motivates the following definition:
\begin{definition}
\label{def:sig.nm}
For any integers $n \geq 0$, $m \geq 2$ and for fixed $\sig_0 > 0$ we define the \emph{$(n,m)$-th order approximation of the implied volatility} as
\begin{equation}\label{eq:sig.n.m}
\sig^{(n,m)} := \sig_0 + \sum_{k=1}^n \sig_k^{(m)}, 
\end{equation}
where, for any $k=1,\ldots,n$, $\sig_k^{(m)}$ is defined as
\begin{align}
\sig_k^{(m)}  
	&:= \frac{1}{\d_\sig u^{\BS}(t,x,\sig_0)} \( u_k^{(m)} - \sum_{n=2}^\infty \frac{1}{n!}
			\( \sum_{j_1+\cdots+j_n=k} \prod_{i=1}^n \sig_{j_i}^{(m)} \)
			\d_\sig^n u^{\BS}(t,x,\sig_0)\) . \label{eq:sig.k.m}
\end{align}
\end{definition}
Note that~$\sig_k^{(m)}$ is obtained from $\sig_k$ by replacing $u_k$ in~\eqref{eq:sig.k}
by its $m$th order approximation~$u_k^{(m)}$.
The following theorem, proved in Appendix~\ref{sec:proof}, is the main result of this paper 
and provides and explicit expression (not involving the derivatives of Black-Scholes) 
for the $(n,m)$-th order approximation of the implied volatility.
\begin{theorem}
\label{thm:main}
Fix $(t,x,\zeta,\sig_0)\in (0,\infty)\times\mathbb{R}\times\mathbb{R}\times(0,\infty)$, and define
\begin{equation}\label{eq:y0}
y_0 := \frac{1}{\sig_0\sqrt{2t}}\(x-\zeta-\frac{1}{2}\sig_0^2 t\).
\end{equation}
For $\phi(t,\lam)$, $\phi_0(\lam,\sig_0)$, $\phi_1(t,\lam;\sig_0)$ and $a_k(t;\sigma_0)$ defined respectively
in~\eqref{eq:char.form},~\eqref{eq:uBS},~\eqref{eq:phi1} and~\eqref{eq:Coefak}, 
the $(n,m)$-th approximation~\eqref{eq:sig.n.m} holds, 
with $\sig_k^{(m)}=U_k^{(m)}-\Sigma_k^{(m)}$ and
\begin{align}
U_n^{(m)}
	& :=	\frac{1}{n! t \sig_0}  \sum_{k=2 n}^{n m} 
			\sum_{\substack{j_1+\cdots+j_n = k \\ 2 \leq j_1,\cdots,j_n \leq m}} 
			\( \prod_{i=1}^n a_{j_i} \) \sum_{k_1=2}^{j_1} \cdots \sum_{k_n=2}^{j_n} 
			\binom{n-1}{m}(-1)^{n-1-m} 
			\\ & \qquad \times
			\(-\frac{1}{\sig_0\sqrt{2t}}\)^{-n-1+m+\sum_{j=1}^n {k_j}}
			H_{-n-1+m+\sum_{j=1}^n {k_j}}(y_0) , \\
\Sigma_k^{(m)}
	& :=	\sum_{n=2}^\infty \frac{1}{n!}
			\( \sum_{j_1 + \cdots + j_n = k} \prod_{i=1}^n \sig_{j_i}^{(m)} \)
			%\\ & \qquad
			\sum_{q=0}^{\left\lfloor n/2 \right\rfloor}\sum_{p=0}^{n-q-1}
			\binom{n-q-1}{p}c_{n,n-2q}\sig_0^{-(q+p)} t^{n-q-1} 
			\left(\sqrt{2 t}\right)^{1-p-n+q} H_{p+n-q-1}(y_0).
\end{align}
Here, $H_n(y) \equiv (-1)^n \E^{y^2} \d_y^n \E^{-y^2}$ is the $n$-th Hermite polynomial,
the coefficients $(c_{n,n-2k})$ are defined recursively by
$c_{n,n}=1$ and $c_{n,n-2q} =(n-2q+1) c_{n-1,n-2q+1} + c_{n-1,n-2q-1}$,
for any integer $q \in \{ 1, 2, \cdots , \left\lfloor n/2 \right\rfloor \}$.
\end{theorem}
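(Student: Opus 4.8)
The plan is to unwind the recursion defining $\sig_k^{(m)}$ in~\eqref{eq:sig.k.m} into two explicit pieces and then identify each of them. By Definition~\ref{def:sig.nm} one has, for every $k\ge 1$, $\sig_k^{(m)}=U_k^{(m)}-\Sigma_k^{(m)}$ with $U_k^{(m)}:=u_k^{(m)}/\d_\sig u^{\BS}$ and $\Sigma_k^{(m)}:=\frac{1}{\d_\sig u^{\BS}}\sum_{n\ge 2}\frac{1}{n!}\big(\sum_{j_1+\cdots+j_n=k}\prod_{i=1}^{n}\sig_{j_i}^{(m)}\big)\,\d_\sig^n u^{\BS}$, and this sum over $n$ is actually finite since each multi-index has $j_i\ge 1$, forcing $n\le k$. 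Thus the statement reduces to computing the two ratios $u_n^{(m)}/\d_\sig u^{\BS}$ and $\d_\sig^n u^{\BS}/\d_\sig u^{\BS}$ in closed form, and the key mechanism is that every $x$-derivative of $u^{\BS}$ entering these expressions collapses onto a Gaussian times a Hermite polynomial evaluated at $y_0$.

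First I would record the preparatory identities. Differentiating the explicit formula~\eqref{eq:uBS.2} gives $\d_x u^{\BS}=\E^x\Nc(d_+)$ and $\d_x^2 u^{\BS}=\d_x u^{\BS}+\E^x\Nc'(d_+)/(\sig_0\sqrt t)$; combined with the elementary symmetry $\E^x\Nc'(d_+)=\E^\zeta\Nc'(d_-)$ (equivalently $d_+^2-d_-^2=2(x-\zeta)$) and the relation $d_-=\sqrt2\,y_0$ coming from~\eqref{eq:y0}, this yields the central identity $(\d_x^2-\d_x)u^{\BS}=\frac{\E^\zeta}{\sig_0\sqrt{2\pi t}}\E^{-y_0^2}$, hence also $\d_\sig u^{\BS}=\frac{\E^\zeta\sqrt t}{\sqrt{2\pi}}\E^{-y_0^2}=\sig_0 t(\d_x^2-\d_x)u^{\BS}$, the Vega--Gamma relation recalled in Section~\ref{sec:simple}. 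Since $\d_x y_0$ is constant in $x$, the chain rule together with the definition of the Hermite polynomials gives $\d_x^k\E^{-y_0^2}=\big(-\frac{1}{\sig_0\sqrt{2t}}\big)^k H_k(y_0)\E^{-y_0^2}$. Finally, because $\d_x^2$ and $\d_x$ commute, one has the operator identities $(\d_x^2-\d_x)^r=\sum_{p=0}^r\binom{r}{p}(-1)^{r-p}\d_x^{r+p}$ and $\d_x^j-\d_x=(\d_x^2-\d_x)\sum_{l=0}^{j-2}\d_x^l$.

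I would then prove by induction on $n$, using the Vega--Gamma relation in the inductive step, that $\d_\sig^n u^{\BS}=\sum_{q=0}^{\lfloor n/2\rfloor}c_{n,n-2q}\,\sig_0^{n-2q}\,t^{n-q}\,(\d_x^2-\d_x)^{n-q}u^{\BS}$, with $c_{n,n}=1$ and $c_{n,n-2q}$ obeying exactly the recursion in the statement: applying $\d_\sig$, the product rule acting on the factor $\sig_0^{n-2q}$ and on $u^{\BS}$ produces precisely the two terms of that recursion, and the constraint $n-2q\ge 0$ gives the cutoff $\lfloor n/2\rfloor$. Combining this with the preparatory identities, writing $(\d_x^2-\d_x)^{n-q}u^{\BS}=(\d_x^2-\d_x)^{n-q-1}\big[\frac{\E^\zeta}{\sig_0\sqrt{2\pi t}}\E^{-y_0^2}\big]$, expanding $(\d_x^2-\d_x)^{n-q-1}$ by the binomial identity, applying the formula for $\d_x^k\E^{-y_0^2}$, and dividing by $\d_\sig u^{\BS}$ (the ratio of the two scalar prefactors being $1/(\sig_0 t)$) yields exactly the inner double sum over $(q,p)$ appearing in $\Sigma_k^{(m)}$. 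For $U_n^{(m)}$, I would start from $u_n^{(m)}=\frac{1}{n!}\big(\sum_{k=2}^m a_k(\d_x^k-\d_x)\big)^n u^{\BS}$ from~\eqref{eq:u.n.m}, expand the $n$-th power multinomially into a sum over $(j_1,\dots,j_n)\in\{2,\dots,m\}^n$, use $\d_x^{j_i}-\d_x=(\d_x^2-\d_x)\sum_{l=0}^{j_i-2}\d_x^l$ to factor out a common $(\d_x^2-\d_x)^n$, reindex the $n$ geometric sums by $k_i=l_i+2$, apply the central identity once and then the formula for $\d_x^k\E^{-y_0^2}$, divide by $\d_\sig u^{\BS}$, and regroup the sum over $(j_1,\dots,j_n)$ by the value $k=j_1+\cdots+j_n\in\{2n,\dots,nm\}$; this produces the asserted expression for $U_n^{(m)}$. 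Substituting both pieces into $\sig_k^{(m)}=U_k^{(m)}-\Sigma_k^{(m)}$ and summing over $k=1,\dots,n$ gives~\eqref{eq:sig.n.m}.

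The argument is conceptually routine, so the main obstacle is bookkeeping. The two places that demand care are the induction yielding the coefficients $c_{n,n-2q}$ (keeping index ranges and the $\lfloor n/2\rfloor$ cutoff mutually consistent) and the systematic tracking of the powers of $\sig_0$, $t$ and $\sqrt{2t}$ and of the signs through the repeated composition of $(\d_x^2-\d_x)^r$, the multinomial expansion of $u_n^{(m)}$, and the final division by $\d_\sig u^{\BS}$ --- all while carrying through untouched the convolution structure $\sum_{j_1+\cdots+j_n=k}\prod_i\sig_{j_i}^{(m)}$ inherited from~\eqref{eq:sig.k.m}. One should also note that truncating at order $n$ is harmless because $\sig_k^{(m)}$ depends only on $\sig_j^{(m)}$ for $j\le k-1$, so the superficially infinite sums over $n$ in $\Sigma_k^{(m)}$ are in fact finite.
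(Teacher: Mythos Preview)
Your proposal is correct and follows essentially the same route as the paper: the paper likewise splits $\sig_k^{(m)}$ into $u_k^{(m)}/\d_\sig u^{\BS}$ and the sum involving $\d_\sig^n u^{\BS}/\d_\sig u^{\BS}$, proves the identity $\d_\sig^n u^{\BS}=\sum_{q}c_{n,n-2q}\sig_0^{n-2q}t^{n-q}(\d_x^2-\d_x)^{n-q}u^{\BS}$ by the same induction you outline, and reduces both ratios to Hermite polynomials in $y_0$ via the same Gaussian identity $(\d_x^2-\d_x)u^{\BS}=\frac{\E^\zeta}{\sig_0\sqrt{2\pi t}}\E^{-y_0^2}$ and the telescoping $\d_x^j-\d_x=(\d_x^2-\d_x)\sum_{l=0}^{j-2}\d_x^{l}$. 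Your observation that the sum over $n$ in $\Sigma_k^{(m)}$ terminates at $n=k$ is a welcome clarification the paper leaves implicit.
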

\begin{example}
\label{ex:computation}
To illustrate how the above theorem works in practice, we compute $\sig_2^{(3)}$ explicitly.
Fix $(t,x,\zeta,\sig_0)$ and write $a_k = a_k(t;\sig_0)$.
Equation~\eqref{eq:u.n.m} implies
\begin{align}
u_1^{(3)}(t,x,\sig_0)
	&%=	\sum_{k=2}^3 a_k (\d_x^k - \d_x ) u^\BS
	=	\Big\{a_3 \d_x + (a_2 + a_3) \Big\} (\d_x^2 - \d_x )  u^\BS(t,x,\sig_0),\\
u_2^{(3)}(t,x,\sig_0)
	&%=	\frac{1}{2!}\( \sum_{k=2}^3 a_k (\d_x^k - \d_x ) \)^2 u^\BS
	=	\frac{1}{2} \Big\{a_3^2 \d_x^4 + (2 a_2 a_3 + a_3^2)\d_x^3 + (a_2^2-a_3^2)\d_x^2 + (-a_2^2-a_3^2-2 a_2 a_3)\d_x \Big\} (\d_x^2 - \d_x )  u^\BS(t,x,\sig_0).
\end{align}
Next, using Proposition \ref{prop2} we have
\begin{align}
\frac{u_1^{(3)}}{\d_\sig u^\BS}(t,x,\sigma_0)
	&=	\displaystyle \frac{1}{t \sig_0} \left\{
			a_3 \( \tfrac{-1}{\sig_0\sqrt{2t}}\) H_1(y_0) + (a_2+a_3)
			\right\}, \label{eq:ex1} \\
\frac{u_2^{(3)}}{\d_\sig u^\BS}(t,x,\sigma_0)
	&=	\displaystyle \frac{1}{t \sig_0} \left\{ 
			a_3^2 \( \tfrac{-1}{\sig_0\sqrt{2t}}\)^4 H_4(y_0) + (2 a_2 a_3 + a_3^2)\( \tfrac{-1}{\sig_0\sqrt{2t}}\)^3 H_3(y_0)\right.
			\\ &\qquad \qquad
			\displaystyle \left.+ (a_2^2-a_3^2) \( \tfrac{-1}{\sig_0\sqrt{2t}}\)^2 H_2(y_0) + (-a_2^2-a_3^2-2 a_2 a_3) \( \tfrac{-1}{\sig_0\sqrt{2t}}\) H_1(y_0)
			\right\}, \label{eq:ex2}
\end{align}
with $y_0$ defined in~\eqref{eq:y0}.  
From Proposition~\ref{prop1} we then have
$\d_\sig^2 u^\BS =	\( \sig_0^2 \Lc^2 - \Lc \) u^\BS$, where 
$\Lc = t (\d_x^2 - \d_x)$.
Therefore, recalling that $\d_\sig u^\BS = \sig_0 \Lc u^\BS$ we obtain
\begin{align}
\frac{\d_\sig^2 u^{\BS}}{\d_\sig u^{\BS}}(t,x,\sigma_0)
	&=	\frac{1}{t\sig_0} \( \frac{t^2\sig_0^2\d_x^2(\d_x^2-\d_x)u^\BS}{(\d_x^2-\d_x)u^\BS} 
			- \frac{t^2\sig_0^2\d_x(\d_x^2-\d_x)u^\BS}{(\d_x^2-\d_x)u^\BS} 
			+ \frac{t(\d_x^2-\d_x)u^\BS}{(\d_x^2-\d_x)u^\BS}  \)(t,x,\sigma_0)  \\
	&=	\frac{1}{t\sig_0} \(
			t^2\sig_0^2 \( \tfrac{-1}{\sig_0\sqrt{2t}}\)^2 H_2(y_0)
			- t^2\sig_0^2 \( \tfrac{-1}{\sig_0\sqrt{2t}}\) H_1(y_0)
			+ t \)
	=	\frac{(\zeta-x)^2}{t \sig_0^3}-\frac{t \sig_0 }{4} . \label{eq:ex3}
\end{align}
Lastly, from \eqref{eq:sig.n.m}-\eqref{eq:sig.k.m} we have
\begin{align}
\sig^{(2,3)}
	&=	\sig_0 + \sig_1^{(3)} + \sig_2^{(3)} , &
\sig_1^{(3)}
	&= 	\frac{u_1^{(3)}}{\d_\sig u^\BS} , &
\sig_2^{(3)}
	&= 	\frac{u_2^{(3)}}{\d_\sig u^\BS} 
			- \frac{1}{2} \(\sig_1^{(3)}\)^2 \frac{\d_\sig^2 u^\BS}{\d_\sig u^\BS}  . \label{eq:ex4}
\end{align}
The explicit expression for $\sig^{(2,3)}$ can be obtained by inserting \eqref{eq:ex1}, \eqref{eq:ex2} and \eqref{eq:ex3} into \eqref{eq:ex4}.
\end{example}

%%%%%%%%%%%%%%%%%%%%%%%%%%%%%%%%%%%%%%%%%%%%%%%%%%%%
%
%		SECTION: Numerical Examples
%
%%%%%%%%%%%%%%%%%%%%%%%%%%%%%%%%%%%%%%%%%%%%%%%%%%%%

\section{Numerical implementation: discussions and examples}
\label{sec:examples}
We now focus on the practical implementation of the results above, namely 
%Theorem~\ref{prop.sig.k}.
Theorem~\ref{thm:main}.
%Section~\ref{sec:optimal} discusses the (arbitrary) choice of the level $\sigma_0$ and 
Section~\ref{sec:SVI} proposes a smoothing procedure to further enhance the applicability of our methodology.
In Sections~\ref{sec:Levy} and~\ref{sec:Heston}, we implement our implied volatility expansion in two exponential L\'evy models (Merton and Variance Gamma) and one stochastic volatility model (Heston).

%%%%%%%%%%%%%%%%%%%%%%%%%%%%%%%%%%%%%%%%%%%%%%%%
%%%%%%%%%%%%%%%%%%%%%%%%%%%%%%%%%%%%%%%%%%%%%%%%
\subsection{Smoothing with the SVI parameterisation}
\label{sec:SVI}
Option data is often noisy and limited by the number of strikes at which options are liquidly traded.  
In~\cite{gatheral2004parsimonious}, Jim Gatheral introduces the following \emph{Stochastic Volatility Inspired} 
(SVI) parameterisation:
\begin{align}
\sig_t^{\SVI}(\zeta) &= \left\{ \frac{a}{t} + \frac{b}{t} \( \rho (\zeta-x-m) + \sqrt{(\zeta-x-m)^2 + \xi^2} \)\right\}^{1/2},
 \label{eq:SVI}
\end{align}
for any maturity $t>0$, where $a, b\geq 0$, $\xi>0$, $m\in\mathbb{R}$, $\rho\in [-1,1]$.
By fitting the SVI parameterisation to noisy option data, one is able to create a smooth implied volatility smile, which then can be used to interpolate implied volatility between strikes and extrapolate implied volatility to strikes which are not traded.
The density $p^\sig(t,x)$ corresponding to a given implied volatility parameterisation~$\sig(t,\zeta)$ can be computed via the Breeden-Litzenberger formula~\cite{breeden}:
$p^\sig(t,\zeta) = \d_K^2 u^\BS(t,x,\sig(t,\log K);\log K) \big|_{K=\E^\zeta}$.
An implied volatility smile $\zeta \mapsto \sig(t,\zeta)$ is said to be free of butterfly arbitrage 
if the corresponding density is non-negative: $p(t,\cdot)\geq 0$.  
Let $p_t^\SVI(\zeta)$ be the implied volatility smile corresponding to a given SVI parameterisation~\eqref{eq:SVI}.
In general, SVI parameterisation~\eqref{eq:SVI} is not guaranteed to be free of butterfly arbitrage.  
However, for a given set of SVI parameters $(a,b,\rho,m,\xi)$, 
one can easily verify that the corresponding density is non-negative, and therefore free of butterfly arbitrage.  
This and recent arbitrage-free SVI parameterisations have recently been studied in~\cite{gatheral2012arbitrage}
and~\cite{JacquierMartini}, and we refer the interested reader to these papers for more details.
As we shall see in the examples considered in Section \ref{sec:examples}, 
for finite $(n,m)$, the approximate implied volatility $\sig^{(n,m)}$ 
derived in Section~\ref{subsec:imp.vol} has a tendency to oscillate around the true implied volatility 
(see Figures~\ref{fig:Merton},~\ref{fig:VG} and~\ref{fig:Heston}).  
Taking $\sig^{(n,m)}$ to be the true implied volatility could lead to arbitrage opportunities.  
In order to prevent this, we propose to smooth the implied volatility approximation $\sig^{(n,m)}$ 
by fitting the SVI parameterisation to it. 
That is, given a model for the underlying $X$ and a time to maturity $t$, we first compute the approximate implied volatility $\sig^{(n,m)}$ as a function of $\log$-strike $\zeta$,
and then fit an arbitrage-free SVI parameterisation $\sig_t^{\SVI}$ to $\sig^{(n,m)}$ over some range of strikes, usually chosen to be a symmetric interval around $\zeta=x$.

%%%%%%%%%%%%%%%%%%%%%%%%%%%%%%%%%%%%%%%%%%%%%%%%%%%%
% 		Example: Exponential Levy
%%%%%%%%%%%%%%%%%%%%%%%%%%%%%%%%%%%%%%%%%%%%%%%%%%%%

\subsection{Exponential L\'evy models}\label{sec:Levy}
Suppose that $X$ is a L\'evy process with L\'evy triplet $(\mu,a^2,\nu)$.
Then its characteristic function reads
$$
\phi(t,\lam) = t\(\ii \mu\lam + \frac{1}{2} a^2 (\ii \lam)^2 
 + \int_\Rb \nu(\dd z) (\E^{\ii \lam z} - 1 - \Ib_{\{|z| < 1 \}} \ii \lam z) \),
$$
where the drift $\mu$ is constrained by the martingale condition $\phi(t,-\ii) = 0$:
$\mu = - \frac{1}{2} a^2  - \int_\Rb \nu(\dd z) (\E^z - 1 - \Ib_{ \{|z| < 1 \}} z)$.
From the expansion $(\E^{\ii \lam z} - 1 ) = \sum_{n\geq 1} \frac{1}{n!} (\ii \lam z)^n$ 
we can write 
$$
\phi_1(t,\lam;\sig_0) 
\equiv	\phi(t,\lam) - t \, \phi_0(\lam;\sig_0)
\equiv t \( \mu + I_1 + \tfrac{1}{2} \sig_0^2 \) \ii \lam +  \frac{1}{2} t ( a^2 - \sig_0^2 ) (\ii \lam)^2 + t \sum_{n=2}^\infty \frac{1}{n!} I_n (\ii \lam )^n, 
$$
with
$I_1 := \int_{|z| \geq 1 } \nu(\dd z) z$ and 
$I_n  := \int_\Rb \nu(\dd z) z^n$,
for any $n\geq 2$.
The existence of~$I_n$ is equivalent to the finiteness of the $n$th moment of $X$
by~\cite[Theorem 25.3]{sato1999levy}, which is clearly satisfied under Assumption~\ref{ass:Explosions}.
Hence, the coefficients $a_n(t;\sig_0)$ in~\eqref{eq:Coefak} are given by
\begin{align}
a_2(t;\sig_0)
	&=	\frac{t}{2} (a^2 - \sig_0^2  + I_2 ) , &
a_n(t)
	&=	\frac{t}{n!}I_n , \quad n \geq 3 . \label{eq:Levy.a}
\end{align}
We examine two exponential L\'evy models in detail---the Merton model~\cite{merton1976option} 
and the Variance Gamma model~\cite{madan1998variance}---whose L\'evy measures are given by:
\begin{align}
\text{Merton}:&&
\nu(\dd z)
	&= \frac{\alpha }{\sqrt{2 \pi s^2}}\exp \( \frac{-(z-m)^2}{2s^2} \) \dd z, \\
\text{Variance Gamma}:&&
\nu(\dd z)
	&= \alpha \( \frac{\E^{G z}}{-z} \Ib_{\{ z < 0\} } +  \frac{\E^{-M z}}{z} \Ib_{\{ z > 0\} }\) \dd z,
\end{align}
where $\alpha,s,G,M>0$, $m\in\mathbb{R}$.
The Merton model is a finite-activity L\'evy process ($\nu(\Rb)<\infty$), 
whereas the Variance Gamma model has infinite activity ($\nu(\Rb)=\infty$).  
For infinite activity L\'evy processes, one typically takes the diffusion component to be zero, namely $a=0$.  
We now examine the accuracy of the implied volatility expansion above in these models:
the Merton model in Figure~\ref{fig:Merton} and the Variance Gamma in Figure~\ref{fig:VG}.
For each of these two sets of plots, we fix some parameters, and draw the implied volatility approximations~$\sig^{(n,m)}$ 
with $m=7$ for the Merton model and $m=8$ for the Variance-Gamma one, and $n\in\{1,2,3\}$. 
We also plot the SVI smoothing of $\sig^{(3,m)}$ as well as the true implied volatility.
The true option price is computed by a quadrature of the inverse Fourier transform representation~\eqref{eq:InvFourierPrice},
and the true implied volatility is computed by numerical inversion of the Black-Scholes formula 
(we use a simple Newton-Raphson algorithm).
We also plot the total errors between each approximation (and $\sigma^{(3,\cdot)}$ with SVI smoothing) and the true implied volatility.
As discussed above, the implied volatility approximation oscillates around the true implied volatility $\sig$.  
However, the relative error corresponding to $\sig^{(3,m)}$ is less than one percent for nearly all log-moneyness to maturity ratios (LMMRs) satisfying 
$|\zeta-x|/t<1.4$ for both models, 
which is well within the implied volatility bid-ask spread of S\&P $500$ options.
Furthermore, the relative error of $\sig^{(3,m)}$ with SVI smoothing is about one half percent for all $|\zeta-x|/t<1.0$.
As a no-arbitrage consistency check, we also plot the density corresponding to the SVI fit.
The parameters for each model are as follows:
\begin{align}
\text{Merton model: } & & 
\sig_0=0.55, a=0.25, m=-0.15, s=0.3, \alpha=1.5, t=1, x=0, \label{eq:paramMerton}\\
\text{Variance Gamma model: } &  & 
\sig_0=0.55, a=0, M=7, G=6, \alpha=4.5, t=1, x=0.\label{eq:paramVG}
\end{align}

%%%%%%%%%%%%%%%%%%%%%%%%%%%%%%%%%%%%%%%%%%%%%%%%%%%%
% 		Example: Heston
%%%%%%%%%%%%%%%%%%%%%%%%%%%%%%%%%%%%%%%%%%%%%%%%%%%%

\subsection{The Heston model}\label{sec:Heston}
In the Heston model~\cite{heston1993}, the risk-neutral dynamics of $(X,Y)$ are given by
\begin{align*}
\dd X_t &= -\frac{1}{2} Y_t \dd t + \sqrt{Y_t} \dd W_t , &
\dd Y_t &= \kappa (\theta - Y_t) \dd t + \del \sqrt{Y_t} \dd B_t , &
\dd \< W, B \>_t	&=	\rho \, \dd t ,
\end{align*}
with $(X_0,Y_0) = (x,y)\in\mathbb{R}\times (0,\infty)$, $\kappa, \theta, \delta>0$, and where $W$ and $B$ are two standard Brownian motions 
with correlation~$\rho\in [-1,1]$.
Its characteristic function reads $\phi(t,\lam,y) = C(t,\lam) + y D(t,\lam)$, where
\begin{equation*}
\left.
\begin{array}{rll}
C(t,\lambda) & := \displaystyle \frac{\kappa\theta}{\delta^2} \left((\kappa-\I \rho\delta\lambda - d(\lambda)) t
 -2\log\left[\frac{1-\gamma(\lambda) \E^{-d(\lambda)t}}{1-\gamma(\lambda)}\right]\right),
& 
d(\lambda) := \displaystyle \left(\delta^2 (\lambda^2 + \I \lambda) + 
(\kappa - \rho \I \lambda \delta)^2\right)^{1/2},\\
D(t,\lambda) & := \displaystyle \frac{\kappa - \I\rho \delta \lambda - d(\lambda)}{\delta^2}
 \frac{1-\E^{-d(\lambda)t}}{1-\gamma(\lambda) \E^{-d(\lambda)t}},
& 
\gamma(\lambda) := \displaystyle \frac{\kappa - \I\rho \delta \lambda - d(\lambda)}{\kappa - \I\rho \delta \lambda + d(\lambda)}. 
\end{array}
\right.
\end{equation*}
Unlike the exponential L\'evy setting, there is no simple general formula for the coefficients $a_n(t,\sigma_0)$ 
$(n \geq 2)$. 
However, from~\eqref{eq:Coefak}, one can compute
\begin{align}
a_2(t;\sig_0) = & \frac{\E^{2 \kappa t}}{16 \kappa^3} 
\Big(
 4 \E^{\kappa t} \left[2(\theta-y) \kappa^2+2(y+y \kappa t-\theta(2+\kappa t))\kappa\rho \del
 + (\theta + (\theta - y) \kappa t) \del^2\right] - (2 y-\theta ) \del^2
\Big) \\
 & + \frac{1}{16 \kappa^3}
\Big(
8\kappa^2(y+(\kappa t -1)\theta) - 8(y+\theta(\kappa t - 2))\kappa\rho\del - ((5-2\kappa t)\theta-2y) \del^2
\Big)
 - \frac{\sig_0^2 t}{2}.
\end{align}
Higher order terms $(3 \leq n \leq 6)$ are easily computed using any mathematical software, 
and are omitted here for clarity.
In Figure~\ref{fig:Heston}, we plot the function~$\zeta\mapsto\sig_n^{(m)}(\zeta)$ with $m=6$ and $n\in\{1,2,3\}$, 
a calibrated SVI to $\sig_3^{(6)}$ and the true implied volatility (computed exactly as for the L\'evy models above).
We also plot the relative errors between each approximation (and the SVI smoothing of $\sigma^{(3,6)}$) and the true implied volatility.
Again the approximation $\sig^{(n,m)}$ oscillates around the true implied volatility, but
the relative error of $\sig^{(3,6)}$ is less than two percent for nearly all LMMRs satisfying $|\zeta-x|/t<2.0$, 
and that of $\sig^{(3,6)}$ with SVI smoothing is roughly one percent for all 
$|\zeta-x|/t<2.0$.
As before, we also plot the density corresponding to the calibrated SVI parameterisation as a no-arbitrage consistency check.
We use the following set of parameters:
$\sig_0=0.95$, $\kappa=1$, $\theta=0.3$, $\del=0.7$, $\rho=-0.3$, $t=1$, $x=0$, $y=0.5$.

%%%%%%%%%%%%%%%%%%%%%%%%%%%%%%%%%%%%%%%%%%%%%%%%%%%%
% 		Example: Model-free calibration
%%%%%%%%%%%%%%%%%%%%%%%%%%%%%%%%%%%%%%%%%%%%%%%%%%%%

\subsection{Model-free calibration}\label{sec:Calibration}
As noted previously, the model-specific dependence of the approximate implied volatility expansion $\sig^{(n,m)}$ is entirely captured by the coefficients $a_i(t,\sig_0)$ $(2 \leq i \leq m)$.  
This simple structure allows for a model-free calibration of the implied volatility surface.  
Assume one observes implied volatilities for maturities~$(t_i)_{i=1,\ldots, n_T}$ and~$(k_j)_{j=1,\ldots,n_K}$,
where~$n_T$ and~$n_K$ are two integers. 
We shall assume for simplicity that the number of available strikes is the same for each maturity.
We suggest the following procedure:
\begin{itemize}
\item[(i)] Let $\sig_{i,j}:=\sig(t_i,k_j)$ be the quoted implied volatility for an option with maturity $t_i$ and $\log$ strike $k_j$.
\item[(ii)] Let $\sig_{i,j}^{(n,m)}:=\sig^{(n,m)}(t_i,k_j;\sig_0)$ be the approximate implied volatility for an option 
with maturity $t_i$ and $\log$ strike $k_j$ computed using the approximation~\eqref{eq:sig.n.m}.
\item[(iii)] At each maturity $t_i$, leave $\sig_0$ and $a_q(t_i;\sig_0)$ $(2\leq q \leq m)$ as free parameters.   Fit $\sig^{(n,m)}(t_i,\cdot)$ to the market's $t_i$-maturity implied volatility smile $\sig(t_i,\cdot)$ by minimising
$\sum_{j=1}^{n_K} \left|\sig^{(n,m)}(t_i,k_j) - \sig_{ij}^{(n,m)}\right|^2$.
\item[(iv)] As an initial guess, use the largest quoted implied volatility at each maturity for $\sigma_0$,
and $a_q(t_i,\sig_0)=0$.
\end{itemize}
\begin{remark}
With $n=3$ and $m=8$, step~(iii) is instantaneous using Mathematica's {\tt FitTo} or Matlab's {\tt lsqnonlin} for instance.
\end{remark}
We test this procedure on SPX index options from January 4, 2010 with $n=3$ and $m=8$.  
The results for three separate maturities ($t=0.033$, $t=0.70$, $t=1.45$ years) 
are given on Figure~\ref{fig:SPX-FIT}.
The calibrated parameters are ($a_i$ is a shorthand notation for $a_i(t;\sig_0)$):
\begin{center}
  \begin{tabular}{ l | c c c c c c c c }
 & $\sigma_0$ &  $a_2$ &  $a_3$ & $a_4$ & $a_5$ & $a_6$ & $a_7$ & $a_8$\\
\hline
$t=$0.033 & 0.382 &  -1.64E-3 & -1.00E-6 & 1.64E-6 & 1.89E-8 & 8.62E-10 & 5.22E-12 & 1.40E-13\\
$t=$0.70 & 0.659 & -1.31E-1 & -5.00E-3 & 2.40E-3 & 4.24E-4 & 6.96E-5 & 2.88E-6 & 4.83E-7\\
$t=$1.45 & 0.436 & -8.35E-2 & -1.22E-2 & 1.68E-3 & -6.26E-5 & 3.42E-5 & -1.77E-6 & 3.96E-7
  \end{tabular}
\end{center}

\begin{remark}
If the stock price is an exponential L\'evy model, then~\eqref{eq:Levy.a} implies that
$\frac{1}{2} \sig_0^2 + \frac{1}{t_i} a_2(t_i,\sig_0)=\frac{1}{2}(a^2 + I_2)$ 
and $\frac{1}{t_i}a_q(t_i) = \frac{1}{q!}I_q$ $(3\leq q \leq m)$ should be constant.  
If this is not so, then exponential L\'evy models are probably not the best dynamics to describe the underlying.
\end{remark}

\begin{remark}
Our whole methodology is based on approximating the characteristic function of a process by a truncated version
of its expansion with respect to some small parameter.
In essence, this truncation tends to ignore the tail behaviour (high-order terms in the expansion) of the process,
and hence, even though the resulting volatility expansion is accurate around the money, 
there is no reason why it should be so in the tails.
The latter, however, are usually not observable in practice, so that this should be of lesser concern for practical implementation.
This in particular means that, should one plot the densities corresponding to the fit in Figure~\ref{fig:SPX-FIT},
the latter may become negative in the tails (hence allowing for arbitrage opportunities).
In our calibration example (Figure~\ref{fig:SPX-FIT}), the density does remain non-negative though.
If however it was to become negative, one could perform an SVI fit, as explained in Section~\ref{sec:SVI}, or, even better,
use the fully no-arbitrage SVI version developed in~\cite{gatheral2012arbitrage}.
\end{remark}

%%%%%%%%%%%%%%%%%%%%%%%%%%%%%%%%%%%%%%%%%%%%%%%%%%%%
%		SECTION: Thanks
%%%%%%%%%%%%%%%%%%%%%%%%%%%%%%%%%%%%%%%%%%%%%%%%%%%%

\subsection*{Acknowledgments}
M.~Lorig acknowledges financial support from Imperial College London, and both authors are grateful to Claude Martini and Zeliade Systems for their useful comments and for indicating to us the availability of CBOE data.  The authors would also like to thank an anonymous referee, whose suggestions improved both the mathematical quality and readability of this manuscript.

%%%%%%%%%%%%%%%%%%%%%%%%%%%%%%%%%%%%%%%%%%%%%%%%%%%%
%
%			Appendix
%
%%%%%%%%%%%%%%%%%%%%%%%%%%%%%%%%%%%%%%%%%%%%%%%%%%%%

\appendix

\section{Proof of Theorem \ref{thm:main}}
\label{sec:proof}
From~\eqref{eq:sig.k.m} we observe that $\sig_k^{(m)}$ involves both
$\d_\sig^n u^\BS/\d_\sig u^\BS$ and $u_k^{(m)}/\d_\sig u^\BS$.  
Theorem~\ref{thm:main} will follow directly from Definition~\ref{def:sig.nm}, 
and Propositions~\ref{prop1} and~\ref{prop2},
both of which provide explicit expressions for these quantities.
In all the statements and results below, we shall consider 
$(t, x, \zeta, \sigma)\in\mathbb{R}_+\times\mathbb{R}\times\mathbb{R}\times (0,\infty)$, 
and define $y := \frac{1}{\sig\sqrt{2t}}\(x-\zeta-\frac{1}{2}\sig^2 t\)$.
Furthermore, we recall that, for any $n\in\mathbb{N}$, 
$H_n$ denotes the $n$-th Hermite polynomial from Theorem~\ref{thm:main}.
We first start with the following lemma.
\begin{lemma}
\label{lemma:ratio}
For any integers $m\geq 0$, $n \geq 2$ we have
\begin{align}\label{eq:ratio}
\frac{\d_x^m (\d_x^n - \d_x ) u^{\BS}}{(\d_x^2 - \d_x ) u^{\BS}}(t,x,\sig) 
 = \sum_{i=2}^n \left(-\frac{1}{\sig\sqrt{2t}}\right)^{m+i-2} H_{m+i-2}(y).
\end{align}
%holds for any $x\in\mathbb{R}$, where $H_n$ is the $n$-th Hermite polynomial, recalled in Theorem~\ref{thm:main}.
%where $H_n(y) \equiv (-1)^n \E^{y^2} \d_y^n \E^{-y^2}$ is the $n$-th Hermite polynomial.
\end{lemma}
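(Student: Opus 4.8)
The plan is to reduce the identity to a single elementary fact: the Black--Scholes ``Gamma'' term $(\d_x^2-\d_x)u^{\BS}$ is, as a function of $x$, a positive constant times the Gaussian $\E^{-y^2}$, after which the Hermite polynomials appear purely by repeated differentiation.

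\textbf{Step 1 (operator factorisation and reduction).} First I would record the algebraic identity between differential operators
$$
\d_x^n-\d_x=\d_x(\d_x-1)\(\d_x^{n-2}+\d_x^{n-3}+\cdots+\d_x+1\)=(\d_x^2-\d_x)\sum_{i=2}^{n}\d_x^{i-2},
$$
valid for every integer $n\geq 2$. Applying $\d_x^m$ and dividing by $(\d_x^2-\d_x)u^{\BS}$ (which Step~2 shows is strictly positive, so the quotient is well defined), the claimed formula~\eqref{eq:ratio} reduces to showing that, for every integer $p\geq 0$,
$$
\frac{\d_x^p(\d_x^2-\d_x)u^{\BS}}{(\d_x^2-\d_x)u^{\BS}}(t,x,\sig)=\(-\frac{1}{\sig\sqrt{2t}}\)^p H_p(y),
$$
since summing this over $i=2,\dots,n$ with $p=m+i-2$ reproduces the right-hand side of~\eqref{eq:ratio}.

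\textbf{Step 2 (closed form of the Gamma term).} Starting from the Black--Scholes formula~\eqref{eq:uBS.2}, the standard cancellation $\E^x\Nc'(d_+)=\E^\zeta\Nc'(d_-)$ (a consequence of $d_+^2-d_-^2=2(x-\zeta)$) yields the Delta $\d_x u^{\BS}=\E^x\Nc(d_+)$, hence $(\d_x^2-\d_x)u^{\BS}=\tfrac{1}{\sig\sqrt t}\,\E^x\Nc'(d_+)$. Since moreover $d_-=\sqrt2\,y$, that same relation gives $\E^x\E^{-d_+^2/2}=\E^\zeta\E^{-d_-^2/2}=\E^\zeta\E^{-y^2}$, so that
$$
(\d_x^2-\d_x)u^{\BS}(t,x,\sig)=\frac{\E^\zeta}{\sig\sqrt{2\pi t}}\,\E^{-y(x)^2},
$$
i.e.\ a positive constant (in $x$) times $\E^{-y(x)^2}$, where $y(x)=\tfrac{1}{\sig\sqrt{2t}}(x-\zeta-\tfrac12\sig^2t)$ is affine in $x$ with $\d_x y=\tfrac{1}{\sig\sqrt{2t}}$.

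\textbf{Step 3 (differentiation and conclusion).} Differentiating the expression from Step~2 $p$ times in $x$, the chain rule (for the affine substitution $x\mapsto y$) together with the Rodrigues-type definition $H_p(y)=(-1)^p\E^{y^2}\d_y^p\E^{-y^2}$ from Theorem~\ref{thm:main} gives $\d_x^p\E^{-y^2}=(\d_x y)^p\,\d_y^p\E^{-y^2}=\big(-\tfrac{1}{\sig\sqrt{2t}}\big)^p H_p(y)\,\E^{-y^2}$; dividing by $(\d_x^2-\d_x)u^{\BS}=\tfrac{\E^\zeta}{\sig\sqrt{2\pi t}}\E^{-y^2}$ yields exactly the reduced identity of Step~1, which completes the proof. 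I do not anticipate a genuine obstacle here; the only points requiring care are the normalisation bookkeeping (the factor $\sqrt2$ relating $y$ to $d_-$, equivalently the completion of the square behind $\E^x\E^{-d_+^2/2}=\E^\zeta\E^{-y^2}$) and confirming that $(\d_x^2-\d_x)u^{\BS}$ never vanishes so that the quotient in~\eqref{eq:ratio} makes sense.
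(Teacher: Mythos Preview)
Your proof is correct and follows essentially the same route as the paper: the operator factorisation $\d_x^n-\d_x=(\d_x^2-\d_x)\sum_{i=2}^n\d_x^{i-2}$ is exactly the telescoping $\sum_{i=2}^n(\d_x^i-\d_x^{i-1})$ used there, and both then recognise $(\d_x^2-\d_x)u^{\BS}$ as a constant times $\E^{-y^2}$ (via the identity $x-\tfrac12 d_+^2=\zeta-y^2$) and extract the Hermite polynomials from $\d_x^p\E^{-y^2}$ by the Rodrigues formula. Your explicit remark that $(\d_x^2-\d_x)u^{\BS}>0$ (so the quotient is well defined) is a minor clarification the paper leaves implicit.
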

\begin{proof}
From the Black-Scholes call price formula \eqref{eq:uBS.2}
we clearly obtain
$(\d_x^2 - \d_x ) u^{\BS}(t,x,\sig)
 = \frac{1}{\sig \sqrt{t}} \exp \( x - \frac{1}{2}d_+^2(x)\)$,
where 
$d_\pm(x) := \frac{1}{\sig \sqrt{t}} \( x - \zeta \pm \frac{1}{2}\sig^2 t \)$.
Now, for any integers $m \geq 0$ and $n \geq 2$, we have
\begin{align}
\d_x^m (\d_x^n - \d_x ) u^{\BS}(t,x,\sig)
	&=	\d_x^m \sum_{i=2}^n (\d_x^i - \d_x^{i-1} ) u^{\BS}(t,x,\sig)
	=	\sum_{i=2}^n \d_x^{m+i-2} (\d_x^2 - \d_x ) u^{\BS}(t,x,\sig) \\
	&=	\frac{1}{\sig \sqrt{t}} \sum_{i=2}^n \d_x^{m+i-2} \exp \( x - \frac{d_+^2(x)}{2} \) .
 \label{eq:dm.dn-d}
\end{align}
The lemma then follows from the identity $x-\frac{1}{2}d_+^2(x) = - y^2 + \zeta$  and from
\begin{align}
\frac{\d_x^m (\d_x^n - \d_x ) u^{\BS}(t,x,\sig)}{(\d_x^2 - \d_x ) u^{\BS}(t,x,\sig)} 
	&= \frac{\sum_{i=2}^n \d_x^{m+i-2} \exp \( x - \frac{d_+^2(x)}{2} \)}{\exp \( x - \frac{d_+^2(x)}{2} \)} \\
	&=	\sum_{i=2}^n \E^{y^2} \(\frac{1}{\sig\sqrt{2t}}\)^{m+i-2} \d_y^{m+i-2}\left(\E^{-y^2}\right)
	=	\sum_{i=2}^n \(-\frac{1}{\sig\sqrt{2t}}\)^{m+i-2} H_{m+i-2}(y).
\end{align}
\end{proof}
\begin{proposition}\label{prop1}
The identity 
\begin{align}
\frac{\d_\sig^n u^{\BS}}{\d_\sig u^{\BS}}(t,x,\sigma)
	&=	\sum_{q=0}^{\left\lfloor n/2 \right\rfloor}\sum_{p=0}^{n-q-1}
			\binom{n-q-1}{p}c_{n,n-2q}\sig^{n-2q-1} t^{n-q-1} 
			\left(\sig\sqrt{2 t}\right)^{1-p-n+q} H_{p+n-q-1}(y) , \label{eq:2}
\end{align}
holds, where the coefficients $(c_{n,n-2k})$ are defined recursively by
$c_{n,n}=1$ and $c_{n,n-2q} =(n-2q+1) c_{n-1,n-2q+1} + c_{n-1,n-2q-1}$
for $q \in \{ 1, 2, \ldots, \left\lfloor n/2 \right\rfloor \}$.
\end{proposition}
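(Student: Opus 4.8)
The proof will follow from two separate computations, after which Proposition~\ref{prop1} becomes a matter of bookkeeping. Write $\Lc := t(\d_x^2-\d_x)$, so that the classical Black--Scholes Vega--Gamma--Delta relation recalled in Section~\ref{sec:simple} reads $\d_\sig u^{\BS}(t,x,\sig) = \sig\,\Lc\,u^{\BS}(t,x,\sig)$. Since $\Lc$ differentiates only in $x$, it commutes with multiplication by $\sig$, and iterating the product rule should give, by induction on $n\geq 1$,
\begin{equation*}
\d_\sig^n u^{\BS}(t,x,\sig) = \sum_{q=0}^{\lfloor n/2\rfloor} c_{n,n-2q}\,\sig^{n-2q}\,\Lc^{\,n-q}\,u^{\BS}(t,x,\sig),
\end{equation*}
for integers $c_{n,n-2q}$, with the convention $c_{n,j}:=0$ whenever $j<0$. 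The first step of the plan is to prove this identity and, along the way, to read off the recursion for the coefficients $c_{n,n-2q}$. The second step is to evaluate, via Lemma~\ref{lemma:ratio}, the ratio left after dividing the display by $\d_\sig u^{\BS}=\sig\,\Lc\,u^{\BS}$; since $\Lc^{\,n-q}=t^{\,n-q}(\d_x^2-\d_x)^{n-q}$, this reduces to computing $(\d_x^2-\d_x)^{n-q}u^{\BS}/(\d_x^2-\d_x)u^{\BS}$, up to the scalar factor $\sig^{n-2q-1}t^{\,n-q-1}$.

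For the induction, applying $\d_\sig$ to the $(n-1)$-th order formula and substituting $\d_\sig u^{\BS}=\sig\,\Lc\,u^{\BS}$ produces two families of terms: differentiating the power $\sig^{n-1-2q}$ yields terms $(n-1-2q)\,\sig^{n-2-2q}\Lc^{\,n-1-q}u^{\BS}$, while letting $\d_\sig$ act on $u^{\BS}$ yields terms $\sig^{n-2q}\Lc^{\,n-q}u^{\BS}$. Collecting both families under a common monomial $\sig^{n-2q}\Lc^{\,n-q}u^{\BS}$ --- the first family contributing after the shift $q\mapsto q+1$ --- and matching coefficients forces precisely
\begin{equation*}
c_{n,n-2q} = (n-2q+1)\,c_{n-1,n-2q+1} + c_{n-1,n-2q-1},\qquad c_{n,n}=1,
\end{equation*}
the base case $n=1$ reducing to $\d_\sig u^{\BS}=\sig\,\Lc\,u^{\BS}$, i.e.\ $c_{1,1}=1$. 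I expect this re-indexing --- keeping the shift in $q$ straight and checking that out-of-range coefficients vanish, so that the sum genuinely runs over $0\le q\le\lfloor n/2\rfloor$ --- to be the only delicate point, though it is elementary.

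It then remains to compute $(\d_x^2-\d_x)^{j}u^{\BS}/(\d_x^2-\d_x)u^{\BS}$ for $j=n-q$. Using $\d_x^2-\d_x=\d_x(\d_x-1)$, I would write $(\d_x^2-\d_x)^{j}=(\d_x^2-\d_x)^{j-1}(\d_x^2-\d_x)$ and expand $(\d_x^2-\d_x)^{j-1}=\d_x^{\,j-1}(\d_x-1)^{j-1}=\sum_{p=0}^{j-1}\binom{j-1}{p}(-1)^{\,j-1-p}\d_x^{\,j-1+p}$, reducing everything to the ratios $\d_x^{\,m}(\d_x^2-\d_x)u^{\BS}/(\d_x^2-\d_x)u^{\BS}$ with $m=j-1+p$. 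These are exactly Lemma~\ref{lemma:ratio} with $n=2$, evaluating to $\left(-\tfrac{1}{\sig\sqrt{2t}}\right)^{m}H_{m}(y)$. The two sign factors then cancel, $(-1)^{\,j-1-p}\left(-\tfrac{1}{\sig\sqrt{2t}}\right)^{\,j-1+p}=\left(\sig\sqrt{2t}\right)^{-(j-1+p)}$, leaving
\begin{equation*}
\frac{(\d_x^2-\d_x)^{\,n-q}u^{\BS}}{(\d_x^2-\d_x)u^{\BS}} = \sum_{p=0}^{n-q-1}\binom{n-q-1}{p}\left(\sig\sqrt{2t}\right)^{1-p-n+q}H_{p+n-q-1}(y).
\end{equation*}
Multiplying by $c_{n,n-2q}\,\sig^{n-2q-1}t^{\,n-q-1}$ and summing over $q$ then reproduces~\eqref{eq:2} term by term. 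Apart from the induction, the proof amounts to a direct application of Lemma~\ref{lemma:ratio} and the Vega--Gamma--Delta identity, the rest being sign- and normalisation-tracking.
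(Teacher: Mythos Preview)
Your proposal is correct and follows essentially the same route as the paper: first establish by induction that $\d_\sig^n u^{\BS}=\sum_{q}c_{n,n-2q}\sig^{n-2q}\Lc^{n-q}u^{\BS}$ with $\Lc=t(\d_x^2-\d_x)$, then expand $(\d_x^2-\d_x)^{n-q}$ via the binomial theorem and apply Lemma~\ref{lemma:ratio} (with $n=2$) to the resulting ratios. Your write-up is in fact slightly more explicit than the paper's on two points --- the derivation of the recursion for $c_{n,n-2q}$ and the sign cancellation $(-1)^{j-1-p}(-1)^{j-1+p}=1$ that turns $\left(-\tfrac{1}{\sig\sqrt{2t}}\right)^{j-1+p}$ into $\left(\sig\sqrt{2t}\right)^{1-p-n+q}$ --- but the structure of the argument is identical.
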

\begin{proof}
For any integers~$j\geq 2$ and~$k\geq 2$ we can write
\begin{align}
(\d_x^j - \d_x)^k
	&=	(\d_x^j - \d_x)^{k-1} (\d_x^j - \d_x) 
	=	\sum_{l=0}^{k - 1} \binom{k-1}{l} (\d_x^j)^l (-\d_x)^{k-1-l} (\d_x^j - \d_x) \\
	&=	\sum_{l=0}^{k - 1} \binom{k-1}{l} (-1)^{k-1-l}\d_x^{n (j-1) + k-1} (\d_x^j - \d_x) .
\end{align}
Combining this with Lemma~\ref{lemma:ratio} and the identity
$\d_\sig u^\BS = t \sig (\d_x^2 - \d_x) u^\BS$, we obtain
\begin{align}
\frac{(\d_x^j - \d_x)^k u^\BS}{\d_\sig u^\BS}
	&=	\frac{1}{t \sig}\sum_{l=0}^{k - 1} \binom{k-1}{l} (-1)^{k-1-l}
			\frac{\d_x^{l (j-1) + k-1} (\d_x^j - \d_x) u^\BS}{(\d_x^2-\d_x) u^\BS} \\
	&=	\frac{1}{t \sig}\sum_{l=0}^{k - 1} \sum_{i=2}^j \binom{k-1}{l} (-1)^{k-1-l}
			\left(-\frac{1}{\sig\sqrt{2t}}\right)^{l (j-1) + k-1+i-2} H_{l (j-1) + k-1+i-2}(y) . \label{eq:result}
\end{align}
Define the operator $\Lc := t (\d_x^2 - \d_x)$ 
(so that $\d_\sig u^\BS = \sig \Lc u^\BS$); 
for any $n\in\mathbb{N}$, the identity
$$
\d_\sig^n u^\BS 
	= \sum_{q=0}^{\left\lfloor n/2 \right\rfloor} c_{n,n-2q}\sig^{n-2q}\Lc^{n-q} u^\BS, %\label{eq:claim}
$$
follows from a simple recursion, with the coefficients $(c_{n, n-2k})$ defined
as in the proposition.
%where $c_{n,n}=1$ and $c_{n,n-2q} =(n-2q+1) c_{n-1,n-2q+1} + c_{n-1,n-2q-1}$
%for any integer $q \in \{ 1, 2, \cdots , \left\lfloor n/2 \right\rfloor\}$.
Therefore, using~\eqref{eq:result} with $j=2$ and $k=n-q$,
%Using~\eqref{eq:claim}, we obtain
\begin{align}
\frac{\d_\sig^n u^\BS}{\d_\sig u^\BS}
	&=	\sum_{q=0}^{\left\lfloor n/2 \right\rfloor} c_{n,n-2q}\sig^{n-2q}
			\frac{\Lc^{n-q} u^\BS}{\d_\sig u^\BS} 
	=	\sum_{q=0}^{\left\lfloor n/2 \right\rfloor} c_{n,n-2q}\sig^{n-2q} t^{n-q}
			\frac{(\d_x^2-\d_x)^{n-q} u^\BS}{\d_\sig u^\BS} \\
	&=	\sum_{q=0}^{\left\lfloor n/2 \right\rfloor}\sum_{p=0}^{n-q-1}
			c_{n,n-2q}\sig^{n-2q-1} t^{n-q-1} \binom{n-q-1}{p} (-1)^{n-q-1-p}
			\left(-\frac{1}{\sig\sqrt{2t}}\right)^{p+n-q-1} H_{p+n-q-1}(y).
\end{align}
\end{proof}
\begin{proposition}
\label{prop2}
%Recall that $H_n$ is the $n$-th Hermite polynomial, recalled in Theorem \ref{thm:main}
The following equality holds:
%For any $(t,x,k,\sig_0)$, we have
%and let $u^\BS=u^\BS(t,x,\sig_0)$ and $u_n^{(m)}=u_n^{(m)}(t,x,\sig_0)$.  Then
\begin{align*}
\frac{u_n^{(m)}}{\d_\sig u^\BS}(t,x,\sig)
	&=	\frac{1}{n! t \sig}  \sum_{k=2 \cdot n}^{n \cdot m} 
			\sum_{\substack{j_1+\cdots+j_n = k \\ 2 \leq j_1,\cdots,j_n \leq m}} 
			\( \prod_{i=1}^n a_{j_i} \) \sum_{k_1=2}^{j_1} \cdots \sum_{k_n=2}^{j_n} 
			\binom{n-1}{m}(-1)^{n-1-m} \\ & \qquad \times
			\(-\frac{1}{\sig\sqrt{2t}}\)^{-n-1+m+\sum_{j=1}^n {k_j}}
			H_{-n-1+m+\sum_{j=1}^n {k_j}}(y),
\end{align*}
\end{proposition}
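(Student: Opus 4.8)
The plan is to push the differential‑operator representation of $u_n^{(m)}$ supplied by Theorem~\ref{thm:SimpleUn} through two telescoping identities and then invoke Lemma~\ref{lemma:ratio} once at the very end. First I would recall that, by Theorem~\ref{thm:SimpleUn}, $u_n^{(m)}(t,x,\sig)=\frac1{n!}\big(\sum_{k=2}^m a_k(\d_x^k-\d_x)\big)^n u^{\BS}(t,x,\sig)$, while the Delta--Gamma--Vega relation gives $\d_\sig u^{\BS}=t\sig(\d_x^2-\d_x)u^{\BS}$. Because the operators $\d_x^k-\d_x$ are polynomials in $\d_x$ and therefore commute, expanding the $n$‑th power and grouping the multi‑indices $(j_1,\dots,j_n)\in\{2,\dots,m\}^n$ by their sum $k=j_1+\cdots+j_n$ (which ranges over $2n,\dots,nm$) reduces the claim to evaluating, for each fixed multi‑index, the ratio $\prod_{i=1}^n(\d_x^{j_i}-\d_x)\,u^{\BS}\big/(\d_x^2-\d_x)\,u^{\BS}$.

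Next I would telescope twice. Writing $\d_x^{\,j}-\d_x=\sum_{\kappa=2}^{j}\d_x^{\kappa-2}(\d_x^2-\d_x)$ for each of the $n$ factors and using commutativity once more gives
\begin{align*}
\prod_{i=1}^n(\d_x^{j_i}-\d_x)=\sum_{k_1=2}^{j_1}\cdots\sum_{k_n=2}^{j_n}\d_x^{\,\sum_i(k_i-2)}(\d_x^2-\d_x)^n .
\end{align*}
Then the binomial theorem applied to $(\d_x^2-\d_x)^n=(\d_x^2-\d_x)^{n-1}(\d_x^2-\d_x)$, via $(\d_x^2-\d_x)^{n-1}=\sum_{\ell=0}^{n-1}\binom{n-1}{\ell}(-1)^{n-1-\ell}\d_x^{\,n-1+\ell}$, yields after collecting powers of $\d_x$
\begin{align*}
\prod_{i=1}^n(\d_x^{j_i}-\d_x)\,u^{\BS}=\sum_{k_1=2}^{j_1}\cdots\sum_{k_n=2}^{j_n}\sum_{\ell=0}^{n-1}\binom{n-1}{\ell}(-1)^{n-1-\ell}\,\d_x^{\,(\sum_i k_i)-n-1+\ell}(\d_x^2-\d_x)\,u^{\BS} .
\end{align*}
Finally I would apply Lemma~\ref{lemma:ratio} with its index ``$n$'' set to $2$ and its index ``$m$'' set to $p:=(\sum_i k_i)-n-1+\ell$, which gives $\d_x^{\,p}(\d_x^2-\d_x)u^{\BS}\big/(\d_x^2-\d_x)u^{\BS}=\big(-1/(\sig\sqrt{2t})\big)^{p}H_{p}(y)$; substituting this into the two displays above and reinstating the prefactor $1/(n!\,t\sig)$ produces exactly the stated formula, the summation index displayed as ``$m$'' in the proposition being the index I have called $\ell$ (running from $0$ to $n-1$, and clashing notationally with the truncation order $m$).

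The work is essentially combinatorial: the only thing demanding care is the bookkeeping across the four nested summations --- the grouping index $k$, the multi‑index $(j_1,\dots,j_n)$, the telescoping multi‑index $(k_1,\dots,k_n)$, and $\ell$ --- and verifying that the exponents of $\d_x$ match up after both reductions. There is no analytic subtlety, since, as established inside the proof of Lemma~\ref{lemma:ratio}, $(\d_x^2-\d_x)u^{\BS}(t,x,\sig)=\tfrac1{\sig\sqrt t}\exp\!\big(x-\tfrac12 d_+^2(x)\big)$ is strictly positive, so each ratio above is a genuine smooth function. The one genuine ``trick'' — and the step I expect to be the main obstacle to a clean write‑up — is to telescope $(\d_x^2-\d_x)^n$ all the way down to a single $(\d_x^2-\d_x)$ before invoking the lemma, rather than attempting to feed $(\d_x^{\,j}-\d_x)$ into Lemma~\ref{lemma:ratio} directly.
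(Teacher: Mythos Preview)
Your proposal is correct and follows essentially the same route as the paper's proof: expand the operator power in~\eqref{eq:u.n.m} over multi-indices $(j_1,\dots,j_n)$ grouped by their sum, telescope each factor $\d_x^{j_i}-\d_x=\sum_{k_i=2}^{j_i}\d_x^{k_i-2}(\d_x^2-\d_x)$, binomially expand $(\d_x^2-\d_x)^{n-1}$, and then apply Lemma~\ref{lemma:ratio} (with its ``$n$'' equal to $2$) together with $\d_\sig u^{\BS}=t\sig(\d_x^2-\d_x)u^{\BS}$. You also correctly flag the notational clash in the statement, where the binomial index labelled ``$m$'' is a separate summation variable (your $\ell\in\{0,\dots,n-1\}$) and not the truncation order.
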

\begin{proof}
An expansion (and slight reorganisation of the terms) of~\eqref{eq:u.n.m} yields
\begin{equation}\label{eq:A}
u_n^{(m)} =
\frac{1}{n!} \( \sum_{k=2 n}^{n m} 
\sum_{\substack{j_1+\cdots+j_n = k \\ 2 \leq j_1,\cdots,j_n \leq m}} 
\prod_{i=1}^n a_{j_i} (\d_x^{j_i}-\d_x) \) u^\BS.
\end{equation}
Furthermore, we have
\begin{align}
\prod_{i=1}^n a_{j_i} (\d_x^{j_i}-\d_x) 
	&=	\prod_{i=1}^n a_{j_i} \sum_{k_i=2}^{j_i}(\d_x^{k_i}-\d_x^{k_i-1}) 
	=		\prod_{i=1}^n a_{j_i} \sum_{k_i=2}^{j_i} \d_x^{k_i-2} (\d_x^2-\d_x) \\
	&=	\( \prod_{i=1}^n a_{j_i}\) \sum_{k_1=2}^{j_1} \cdots \sum_{k_n=2}^{j_n} 
			\d_x^{k_1-2} \cdots \d_x^{k_n-2} (\d_x^2-\d_x)^{n-1} (\d_x^2-\d_x) \\
	&=	\( \prod_{i=1}^n a_{j_i}\) \sum_{k_1=2}^{j_1} \cdots \sum_{k_n=2}^{j_n} 
			\d_x^{k_1-2} \cdots \d_x^{k_n-2} \binom{n-1}{m}\d_x^{n-1+m}(-1)^{n-1-m} (\d_x^2-\d_x) \\
	&=	\( \prod_{i=1}^n a_{j_i}\) \sum_{k_1=2}^{j_1} \cdots \sum_{k_n=2}^{j_n} 
			\binom{n-1}{m}(-1)^{n-1-m} \d_x^{-n-1+m+\sum_{j=1}^n {k_j}}(\d_x^2-\d_x) . \label{eq:B}
\end{align}
Using Lemma~\ref{lemma:ratio} and 
the equality $\d_\sig u^\BS (t,x,\sig) = t \sig ( \d_x^2 - \d_x ) u^\BS (t,x,\sig)$, we observe
$$
\frac{\d_x^{-n-1+m+\sum_{j=1}^n {k_j}}(\d_x^2-\d_x)u^\BS}{\d_\sig u^\BS}(t,x,\sig)
 = \frac{1}{t \sig} \(-\frac{1}{\sig_0\sqrt{2t}}\)^{-n-1+m+\sum_{j=1}^n {k_j}} H_{-n-1+m+\sum_{j=1}^n {k_j}}(y).
$$
Combining this with~\eqref{eq:A} and~\eqref{eq:B} concludes the proof of the proposition.
\end{proof}

%%%%%%%%%%%%%%%%%%%%%%%%%%%%%%%%%%%%%%%%%%%%%%%%%%%%
%
%			Bibliography
%
%%%%%%%%%%%%%%%%%%%%%%%%%%%%%%%%%%%%%%%%%%%%%%%%%%%%

%\begin{small}
\bibliographystyle{chicago}
\bibliography{BibTeX-Master}	
%\end{small}

%%%%%%%%%%%%%%%%%%%%%%%%%%%%%%%%%%%%%%%%%%%%%%%%%%%%
%
%			Figures
%
%%%%%%%%%%%%%%%%%%%%%%%%%%%%%%%%%%%%%%%%%%%%%%%%%%%%

%%%%%%%%%%%%%%%%%%%%%%%%%%%%%%%%%%%%%%%%%%%%%%%%%%%%
%			 Merton
%%%%%%%%%%%%%%%%%%%%%%%%%%%%%%%%%%%%%%%%%%%%%%%%%%%%

\begin{figure}[ht]\hspace{5pt}
\centering
\subfigure[Implied volatilities]{\includegraphics[scale=0.5]{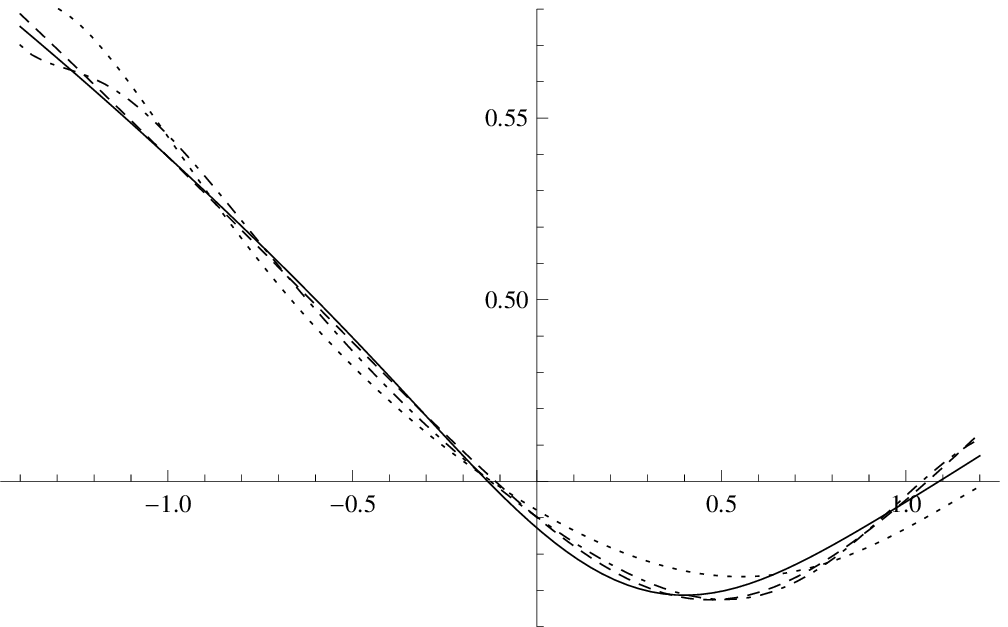}}
\hspace{5pt}
\subfigure[Implied volatility errors]{\includegraphics[scale=0.5]{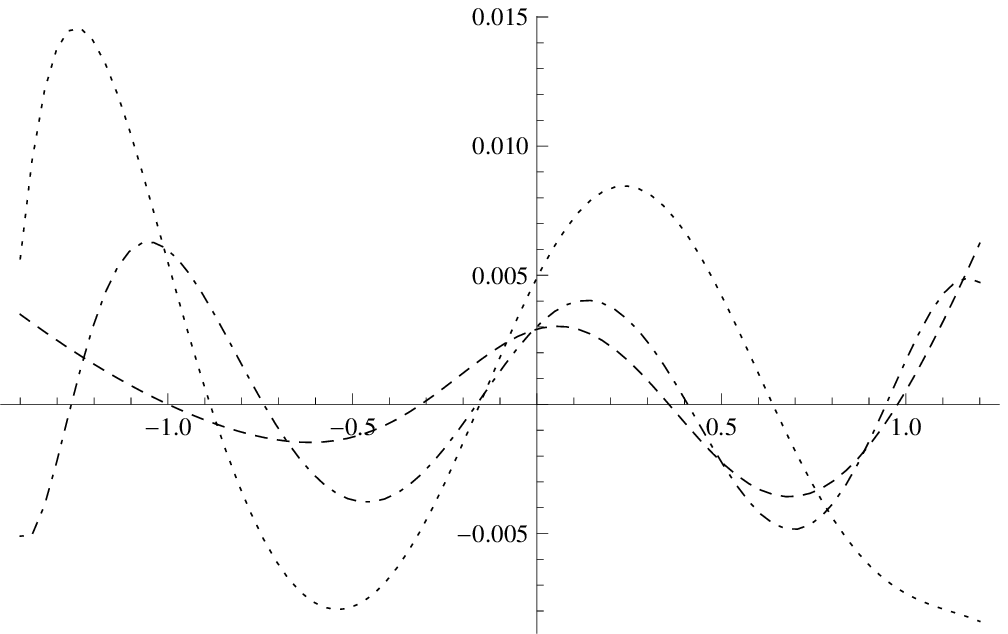}}\\
\subfigure[Density]{\includegraphics[scale=0.5]{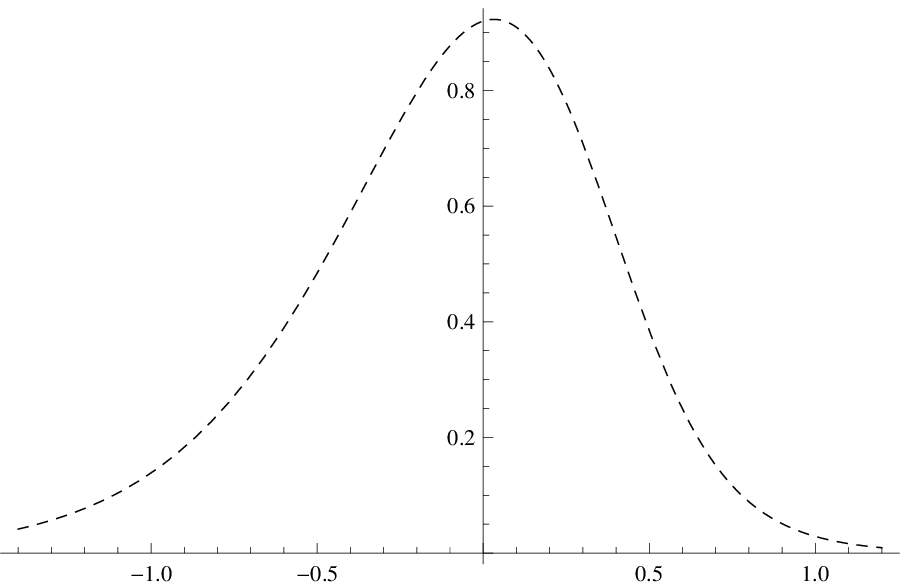}}
\hspace{5pt}
\subfigure[Tails of the density]{\includegraphics[scale=0.5]{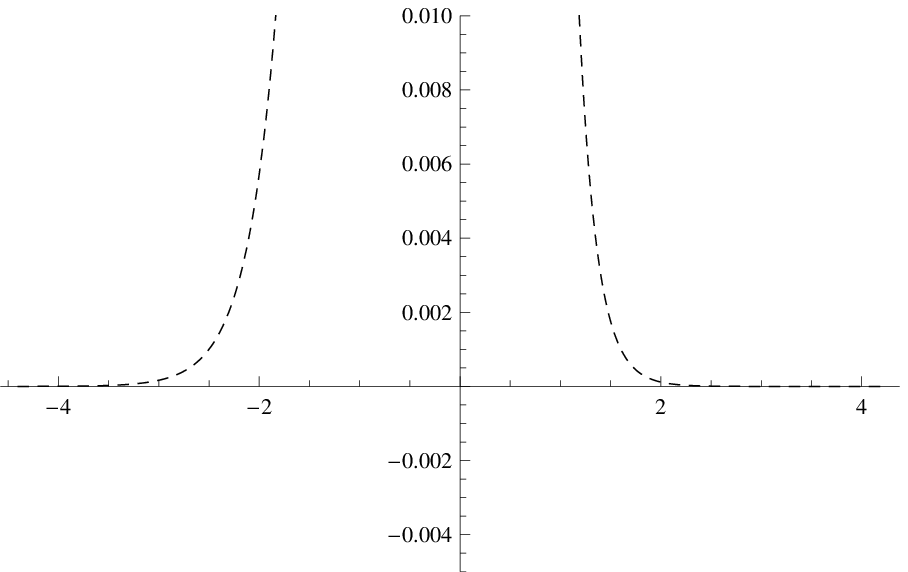}}
\caption{Numerics for the Merton model detailed in Section~\ref{sec:Levy}. 
The top graphs correspond to the true smile (solid line), the approximations $\sigma^{(2,7)}$ (dots) 
and $\sigma^{(3,7)}$ (dots-dashed), and the SVI smoothing of $\sigma^{(3,7)}$ (dashed).
The plots below are the density (and its tails) of the SVI smoothing.}
\label{fig:Merton}
\end{figure}

%%%%%%%%%%%%%%%%%%%%%%%%%%%%%%%%%%%%%%%%%%%%%%%%%%%%
%			 Variance Gamma
%%%%%%%%%%%%%%%%%%%%%%%%%%%%%%%%%%%%%%%%%%%%%%%%%%%%

\begin{figure}[ht]\hspace{5pt}
\centering
\subfigure[Implied volatilities]{\includegraphics[scale=0.5]{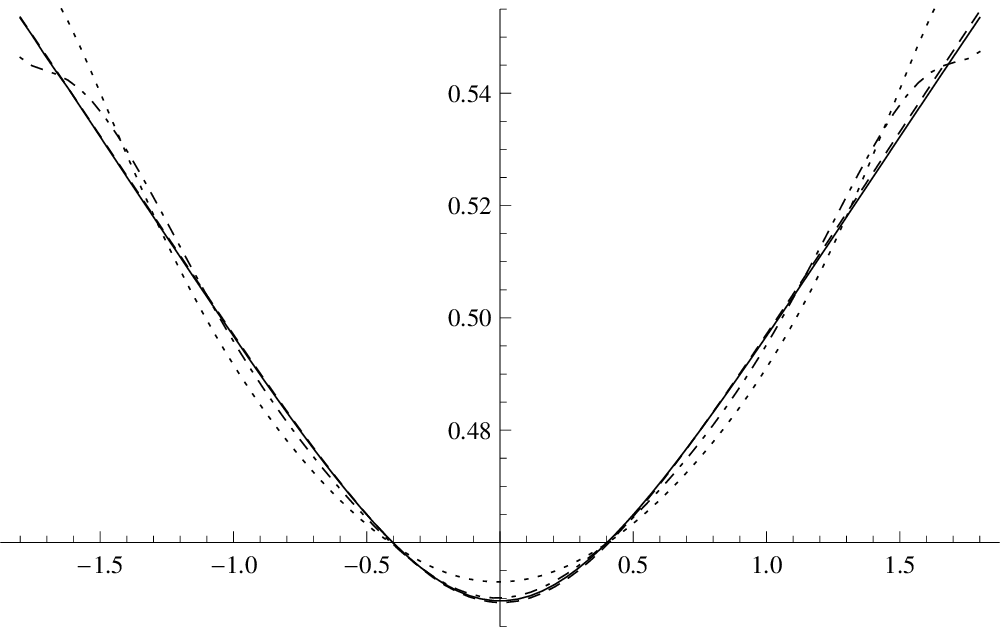}}
\hspace{5pt}
\subfigure[Implied volatility errors]{\includegraphics[scale=0.5]{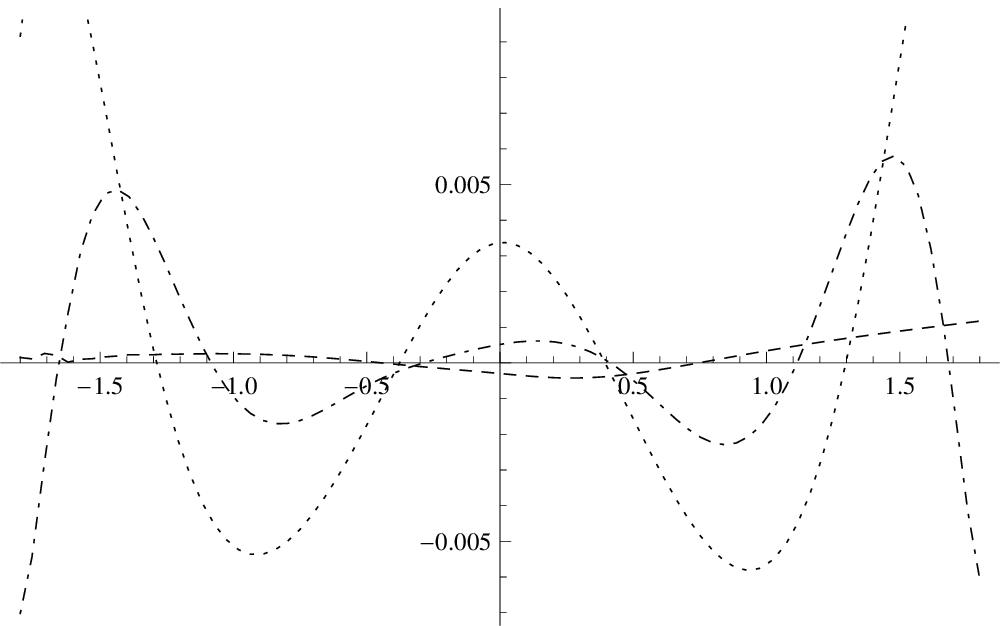}}\\
\subfigure[Density]{\includegraphics[scale=0.5]{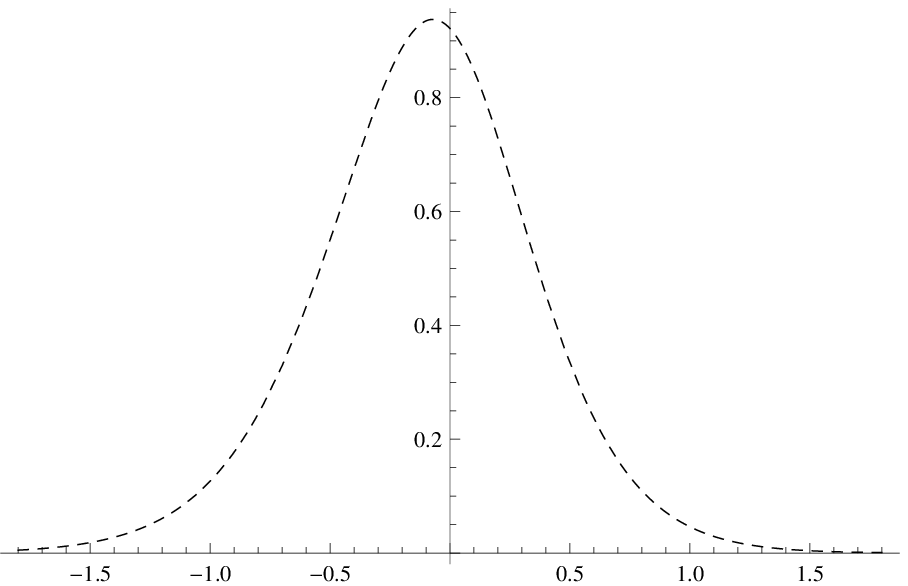}}
\hspace{5pt}
\subfigure[Tails of the density]{\includegraphics[scale=0.5]{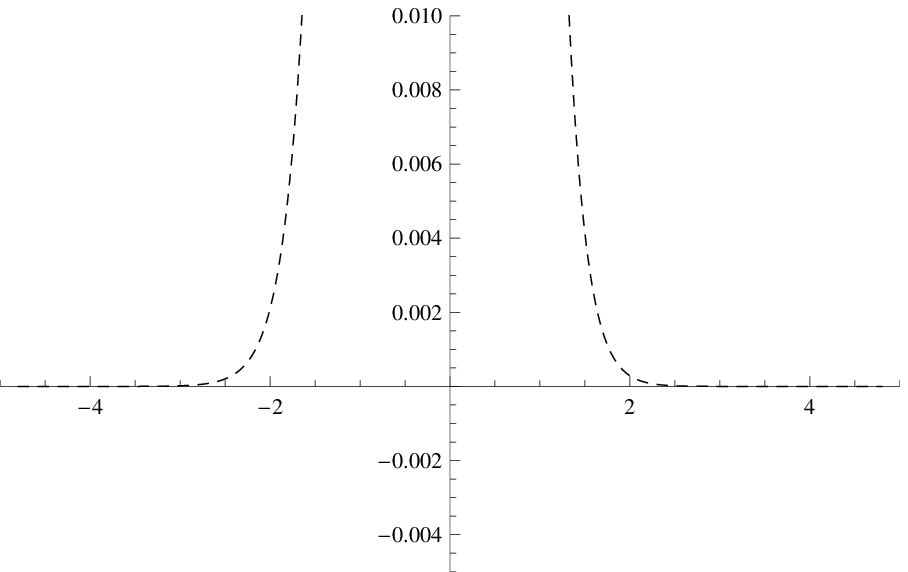}}
\caption{Numerics for the Variance Gamma model detailed in Section~\ref{sec:Levy}. 
The top graphs correspond to the true implied volatility (solid line), the approximations
$\sigma^{(2,8)}$ (dots) and $\sigma^{(3,8)}$ (dots-dashed), 
and the SVI smoothing of $\sigma^{(3,8)}$ (dashed).
The plots below are the density (and its tails) of the SVI smoothing.}
\label{fig:VG}
\end{figure}

%%%%%%%%%%%%%%%%%%%%%%%%%%%%%%%%%%%%%%%%%%%%%%%%%%%%
%			 Heston
%%%%%%%%%%%%%%%%%%%%%%%%%%%%%%%%%%%%%%%%%%%%%%%%%%%%

\begin{figure}[ht]\hspace{5pt}
\centering
\subfigure[Implied volatilities]{\includegraphics[scale=0.5]{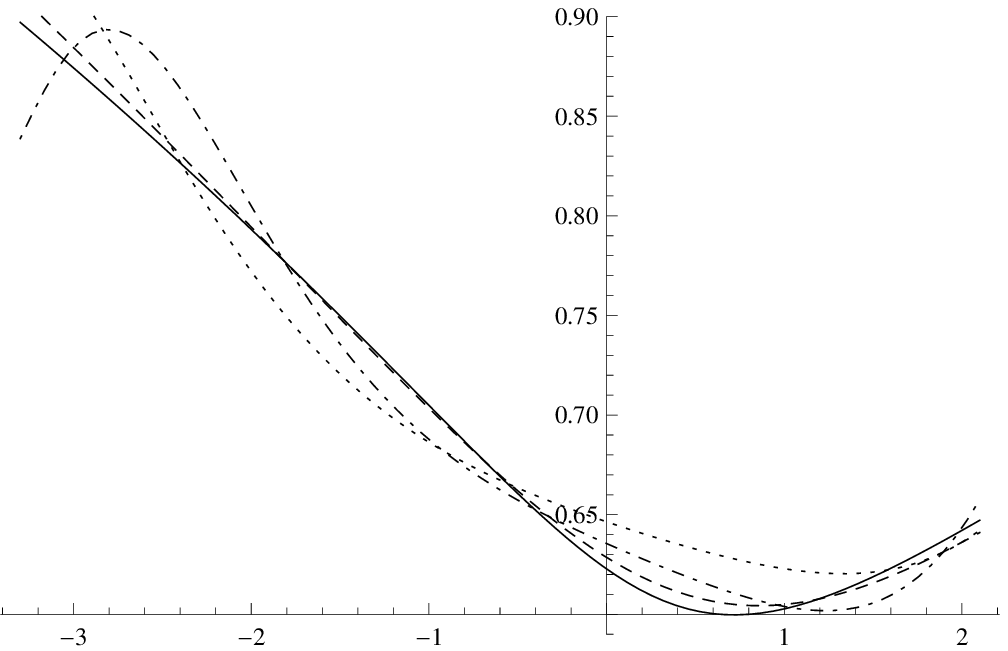}}
\hspace{5pt}
\subfigure[Implied volatility errors]{\includegraphics[scale=0.5]{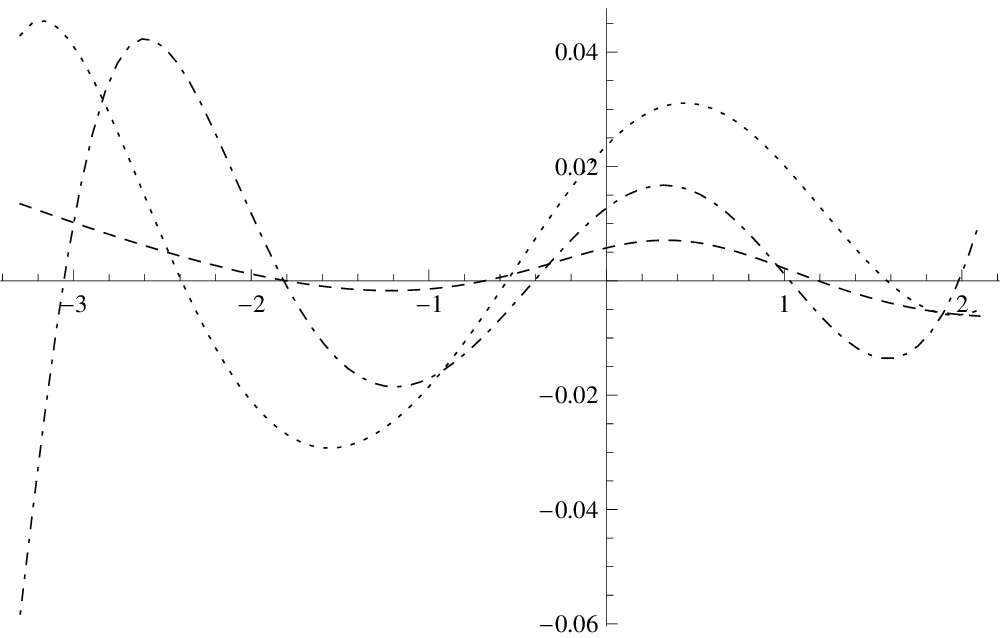}}\\
\subfigure[Density]{\includegraphics[scale=0.5]{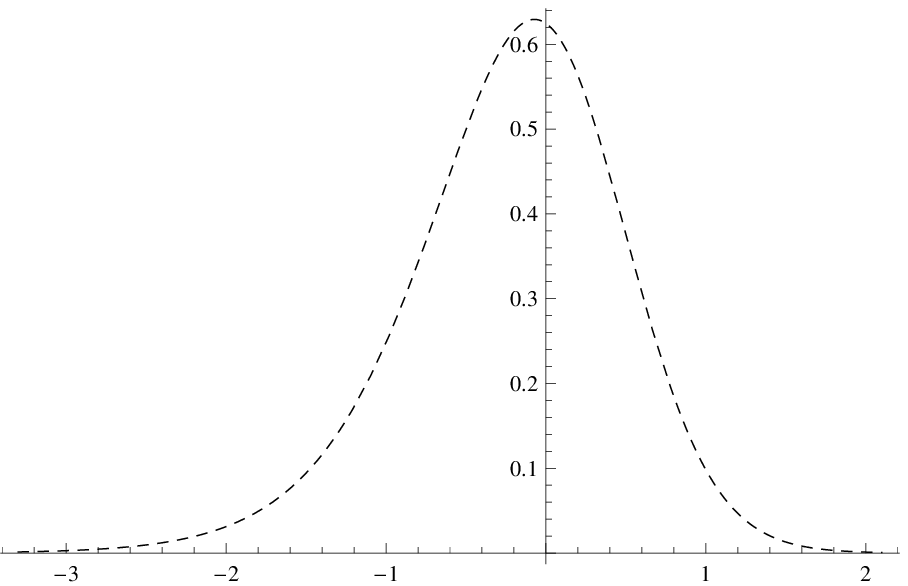}}
\hspace{5pt}
\subfigure[Tails of the density]{\includegraphics[scale=0.5]{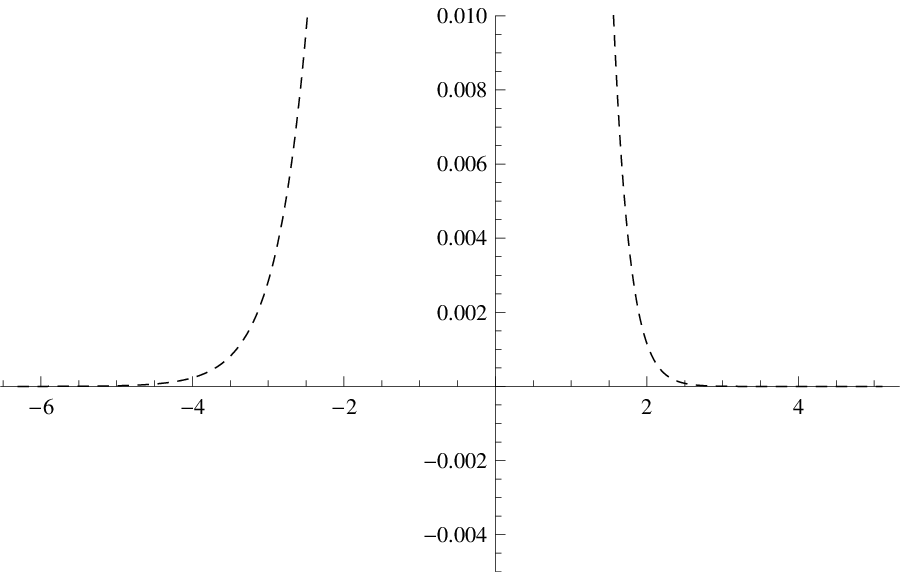}}
\caption{
Numerics for the Heston model detailed in Section~\ref{sec:Heston}. 
The top graphs correspond to the true implied volatility (solid line), 
the approximations $\sigma^{(2,6)}$ (dots) and $\sigma^{(3,6)}$ (dots-dashed), 
and the SVI smoothing of $\sigma^{(3,6)}$ (dashed).
The plots below are the density (and its tails) of the SVI smoothing.}
\label{fig:Heston}
\end{figure}

%%%%%%%%%%%%%%%%%%%%%%%%%%%%%%%%%%%%%%%%%%%%%%%%%%%%
%			 Calibration
%%%%%%%%%%%%%%%%%%%%%%%%%%%%%%%%%%%%%%%%%%%%%%%%%%%%

\begin{figure}[ht]\hspace{5pt}
\centering
\subfigure[$t=0.033$ years]{\includegraphics[scale=0.5]{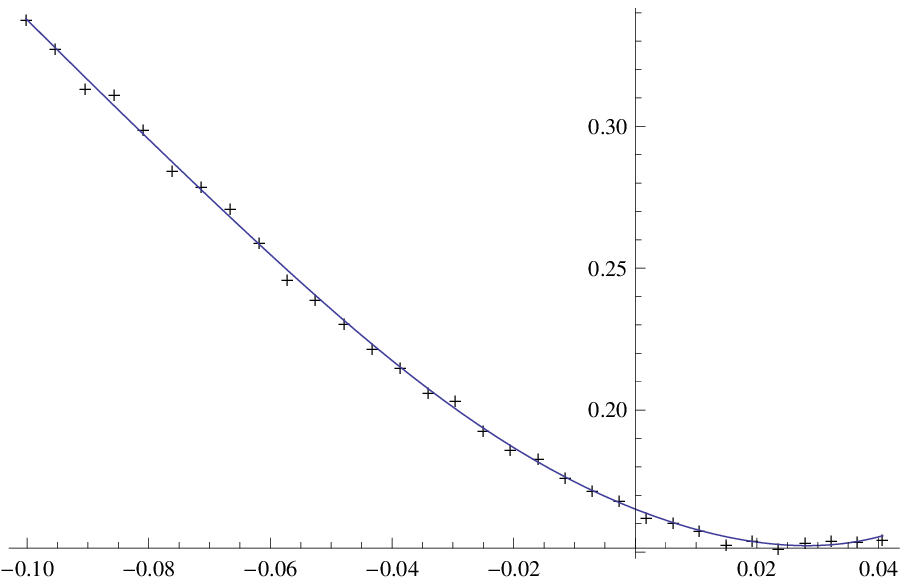}}
\hspace{5pt}
\subfigure[ $t=0.70$ years]{\includegraphics[scale=0.5]{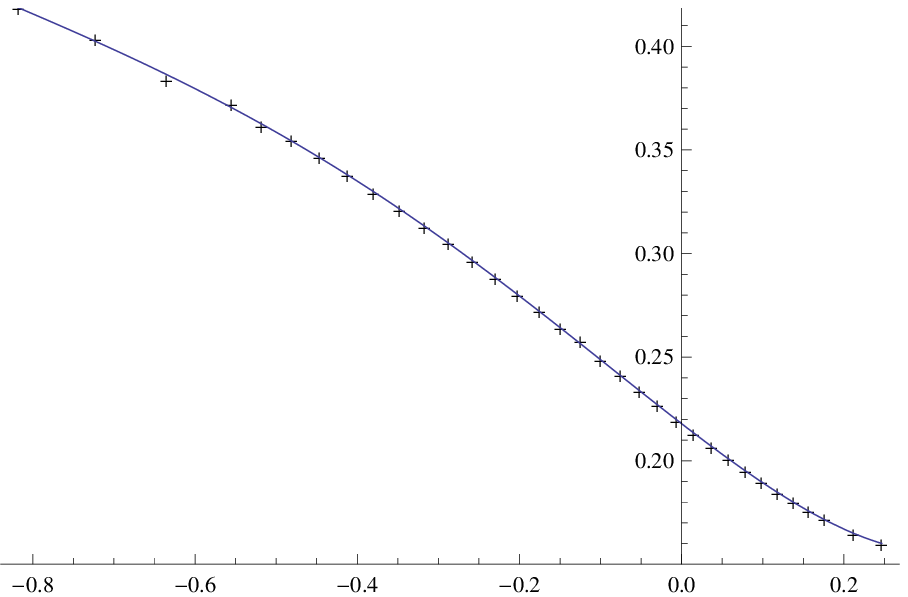}}
\hspace{5pt}
\subfigure[$t=1.45$ years]{\includegraphics[scale=0.5]{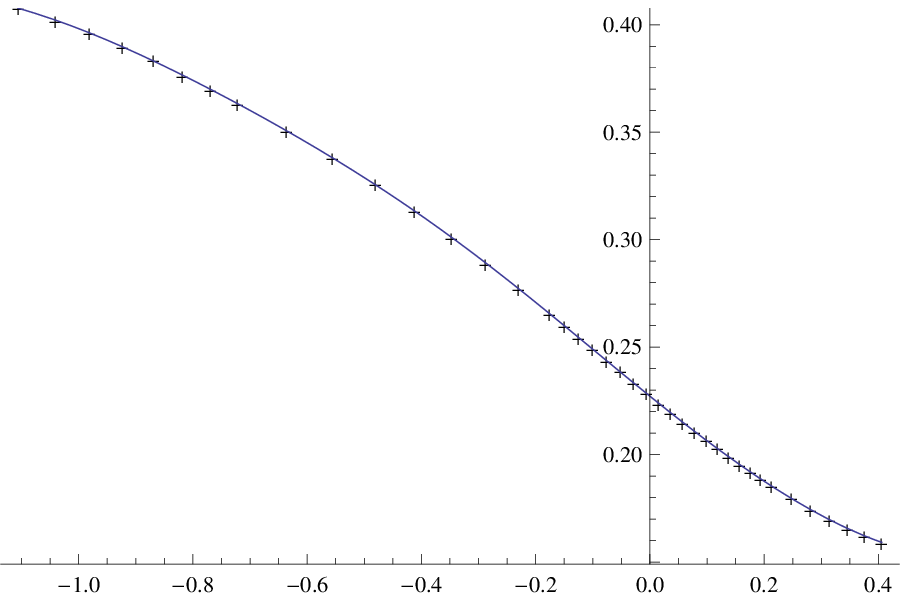}}
\hspace{5pt}
\subfigure[$t=0.033$ years]{\includegraphics[scale=0.5]{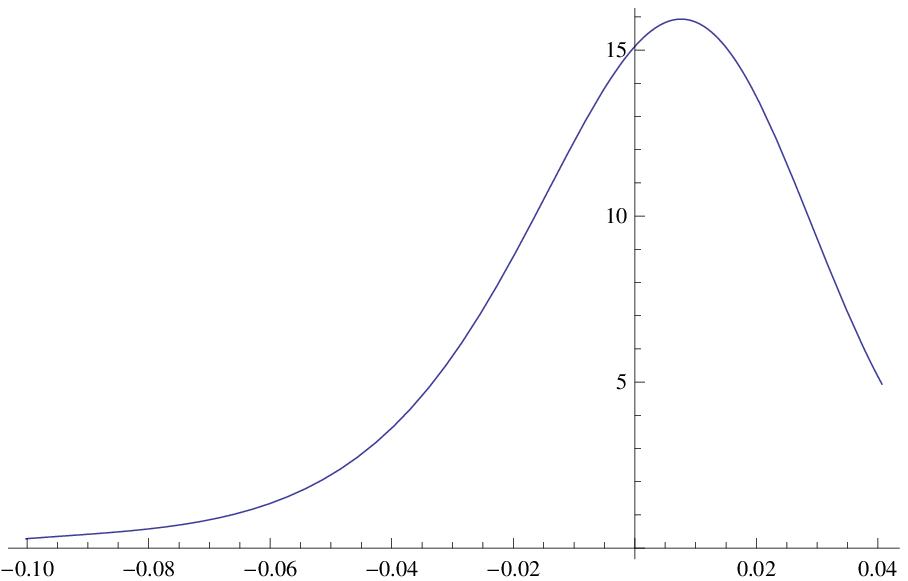}}
\hspace{5pt}
\subfigure[$t=0.70$ years]{\includegraphics[scale=0.5]{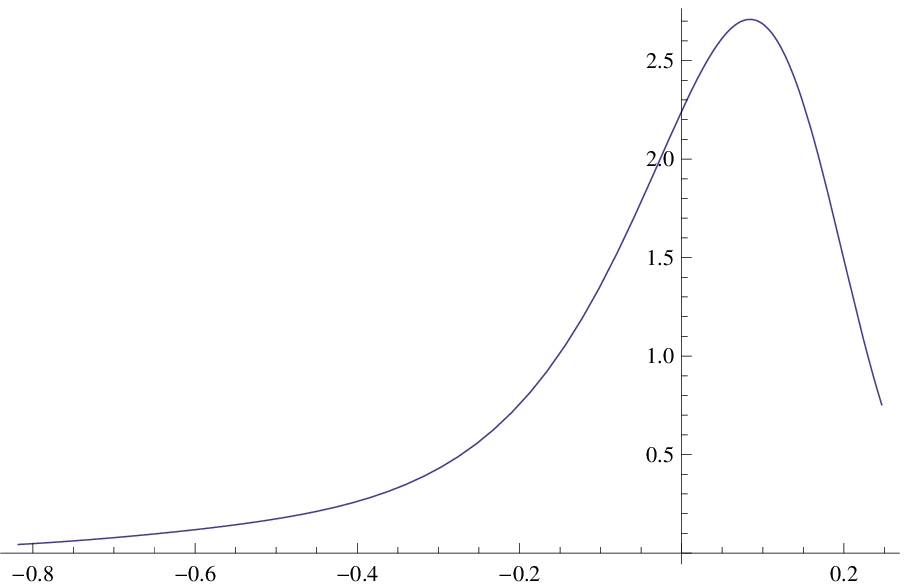}}
\hspace{5pt}
\subfigure[$t=1.45$ years]{\includegraphics[scale=0.5]{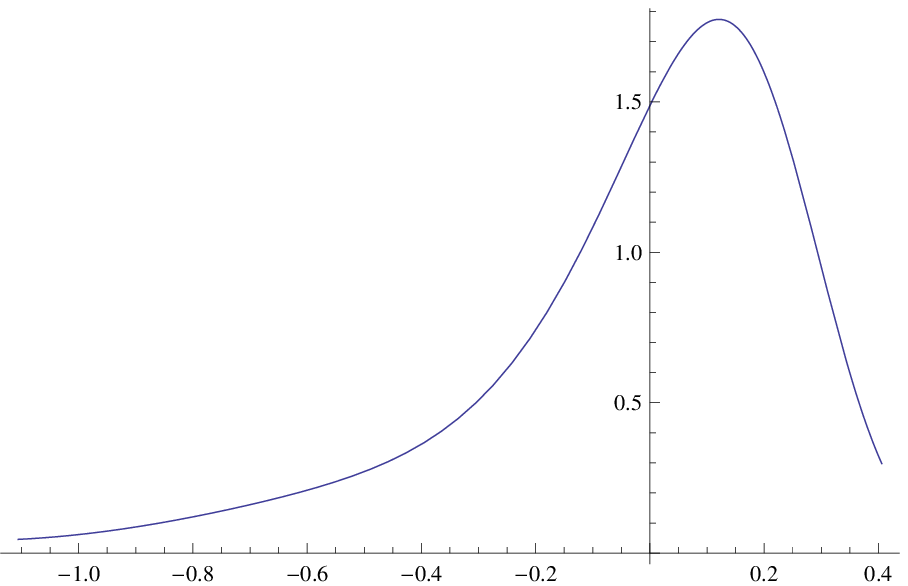}}
\caption{Model-free fit to SPX options from Jan 4, 2010 as explained in Section~\ref{sec:Calibration}.
The horizontal axis represents the log-moneyness $(\zeta-x)$.
The plots below represent the densities of the fit.}
\label{fig:SPX-FIT}
\end{figure}

\end{document}